\documentclass[authoryear,a4paper,review]{elsarticle}
\pdfoutput=1 %for arXiv

\journal{European Journal of Operational Research}

\usepackage{amsmath,amssymb,amsthm,graphicx,hyperref,setspace,subfigure,float,multirow} %,soul %xfrac,soul,stackengine,ifthen,multirow,doi
\usepackage[authoryear]{natbib}
\usepackage[boxruled]{algorithm2e}
%\qedsymbol{\ensuremath{\square}}
%\bibliographystyle{apalike-etal-in-italics}%{apalike-etal-in-italics}
\bibliographystyle{elsarticle-harv} % Unfortunately, doi does not work with apalike

\usepackage[left=1in,right=1in,top=1.4in,bottom=1.4in]{geometry}
\setlength{\footnotesep}{11pt}

\hypersetup{
	pdftitle={},
	pdfauthor={Jaehyuk Choi},%PDFsubject={Subject??},
	pdfkeywords={Stochastic volatility, SABR model, CEV model},
	colorlinks=true,
	linkcolor=red,
	citecolor=blue,
	urlcolor=blue,
	bookmarksnumbered=true,%Show chapter/section number in PDF thumbnails
	pdfstartview=%Acrobat reader's default zoom setting (vs Fit)
}

%\date{August 3, 2024}
\date{September 19, 2025}%放时间

\newcommand{\rhoc}{\rho_\ast}
\newcommand{\betac}{\beta_\ast}

\newcommand{\vov}{\nu}

\newcommand{\dt}{h}
\newcommand{\vovn}{\hat{\vov}}
\newcommand{\IV}{I_0^T}
\newcommand{\IVstep}{I_t^h}
\newcommand{\condF}{\bar{F}_0^T}
\newcommand{\condFstep}[1][]{\bar{F#1}_t^{\dt}}

\newcommand{\cev}[1][]{\mathcal{CEV}_{\beta#1}}
\newcommand{\GAM}{\mathcal{G}}
\newcommand{\NCX}{{\chi^2}}
\newcommand{\POIS}{\mathcal{POIS}}
\newcommand{\SP}{\mathcal{SP}}

\newcommand{\acosh}{\operatorname{arcosh}}

\newcommand{\qtext}[2][\quad]{#1\text{#2}#1}
\newtheorem{proposition}{Proposition}
\newtheorem{definition}{Definition}
\newtheorem{remark}{Remark}

\begin{document}

\begin{frontmatter}
\title{Efficient and accurate simulation of the stochastic-alpha-beta-rho model}

\author[cu]{Jaehyuk Choi\corref{corrauthor}}
\ead{jc6569@columbia.edu}

\cortext[corrauthor]{Correspondence. \textit{Tel:} +1-212-854-2643, \textit{Address:} Rm 426 Mathematics Hall, 2990 Broadway, New York, NY 10027}

\address[cu]{Department of Mathematics, Columbia University, United States}
\author[b]{Lilian Hu}
\author[b]{Yue Kuen Kwok}
\address[b]{Financial Technology Thrust, Hong Kong University of Science and Technology (Guangzhou), China}

\begin{abstract}
We propose an efficient, accurate and reliable simulation scheme for the stochastic-alpha-beta-rho (SABR) model. The two challenges of the SABR simulation lie in sampling (i) integrated variance conditional on terminal volatility and (ii) terminal forward price conditional on terminal volatility and integrated variance. 
For the first sampling procedure, we sample the conditional integrated variance using the moment-matched shifted lognormal approximation. For the second sampling procedure, we approximate the conditional terminal forward price as a constant-elasticity-of-variance (CEV) distribution. Our CEV approximation preserves the martingale condition and precludes arbitrage, which is a key advantage over Islah's approximation used in most SABR simulation schemes in the literature. We then adopt the exact sampling method of the CEV distribution based on the shifted-Poisson mixture Gamma random variable. Our enhanced procedures avoid the tedious Laplace inversion algorithm for sampling integrated variance and non-efficient inverse transform sampling of the forward price in some of the earlier simulation schemes. Numerical results demonstrate our simulation scheme to be highly efficient, accurate, and reliable.
\end{abstract}

\begin{keyword}
	SABR model, constant elasticity of variance process, simulation, shifted-Poisson-mixture Gamma distribution
	% Original Keywords: SABR model, Asymptotic expansions, Implied volatility
\end{keyword}
\end{frontmatter}

\section{Introduction} \noindent
The stochastic-alpha-beta-rho (SABR) model proposed by \cite{hagan2002sabr} has been widely adopted in option pricing due to its ability to capture volatility smile or skew using a few parameters, and exhibiting consistency in revealing the dynamic behavior between asset price and volatility smile. On one hand, it is common in equity options that the implied volatility of out-of-money options is generally higher than that of the  in-the-money counterparts, known as the volatility skew. On the other hand, the relatively symmetric volatility smile is common in the foreign exchange options. The SABR model can capture the volatility skew and volatility smile in the respective markets. In addition, the SABR model exhibits consistency in the dynamic behavior of volatility smile or skew, where an increase of the price of the underlying asset leads to shifting of the volatility smile or skew to a higher price \citep{hagan2002sabr}. The SABR model can generate such co-movements correctly while the local volatility model fails to do so \citep{derman1998stochastic,dupire1997pricing}.

In their original paper, \cite{hagan2002sabr} derived an analytic implied volatility formula under the SABR model via an asymptotic expansion in small time-to-maturity. It has been a standard practice for practitioners to obtain the option price from the implied volatility using the Black--Scholes formula. However, \citet{hagan2002sabr}'s approximation becomes unreliable under large time-to-maturity or when the option is deep-out-of-the-money. As a result, one important stream of the SABR model research has been improving the implied volatility approximation~\citep{obloj2007fine,paulot2015asym,lorig2017lsv,gulisashvili2018mass,yang2017cev-sabr,choi2021sabrcev}. Notably, \cite{antonov2012adv} obtained a more accurate approximation by mapping the implied volatilities under the correlated cases to those of the uncorrelated cases, under which more analytic properties of the SABR model become available.

This paper follows the other stream of research, which is the development of efficient simulation algorithms for pricing options under the SABR model. With growing popularity of the SABR model, there has been demand for pricing path-dependent derivatives using the model. Since analytic approximation approaches are limited to pricing European options, the Monte Carlo method becomes a better choice.
In particular, the research on the SABR simulation started with the less than perfect performance of the Euler and Milstein time-discretization schemes. Since the SABR model imposes an absorbing boundary condition at the origin, the numerical boundary condition in the time discretization schemes have to be more carefully implemented. It has been well known that naive truncation at the origin may lead to large deviation from the original SABR model. Typically, it is necessary to use a small time discretization step to decrease the discretization errors due to its low order of convergence, thus causing heavy computation load. 

Several simulation algorithms~\citep{chen2012low,cai2017sabr,leitao2017one,leitao2017multi,
cui2018ctmc-sabr,grzelak2019stochastic,kyriakou2023unified} have been proposed for the simulation of the SABR model. Simulation of the SABR model over one time step requires sequential sampling procedures of the terminal volatility, average variance, and terminal forward price at the next simulation time point. Since the last two sampling procedures are challenging, existing methods distinguish among themselves by adopting different choices of the algorithms for these two steps. These earlier simulation algorithms have achieved some successes to overcome the difficulty of imposing the absorbing boundary condition in the time-discretization methods. This paper aims to provide more efficient, accurate and reliable simulation scheme that competes favorably with the existing simulation schemes in the literature.

Regarding the sampling of the average variance conditional on the terminal volatility, \cite{chen2012low} sampled this quantity through the mean-and-variance matched lognormal (LN) variable. While the sampling procedures are fast, the time step cannot be large since the derived mean and variance are valid in small-time limit only. For more accurate sampling, later works employed various versions of the transform of the average variance. \cite{leitao2017one,leitao2017multi} derived the approximate Fourier transform of the unconditional average variance recursively, while \citet{cai2017sabr} used the analytic formula of the Laplace transform of the reciprocal of the conditional average variance~\citep{matsuyor2005exp1}. However, these approaches are cumbersome and time-consuming since the tedious numerical inverse transform procedures are required to construct the cumulative distribution function (CDF) and implement the iterative root-finding for the inverse transform sampling. As an alternative approach, \cite{cui2021efficient} approximated the volatility dynamics with the continuous-time Markov chain (CTMC) over volatility grids. Unfortunately, their approach does not immune from heavy computation and cumbersome implementation. In another approach, \citet{kyriakou2023unified} used the moment-matched Pearson family of distributions for efficient sampling. For such purpose, they rely on numerical evaluation of the four order moments from the Laplace transform of conditional average variance. 

Sampling the terminal forward price conditional on the terminal variance and average variance is much more challenging. Unlike the other stochastic volatility models that are based on the geometric Brownian motion (BM), the constant-elasticity-of-variance (CEV) feature of the SABR model does not admit an analytically tractable form of the conditional terminal forward price distribution. As a result, the price distribution has to be approximated in some analytic form first before the corresponding sampling algorithm is considered. For the approximation of the conditional terminal forward price distribution, most existing algorithms~\citep{chen2012low,cai2017sabr,leitao2017one,leitao2017multi,cui2018ctmc-sabr,grzelak2019stochastic,kyriakou2023unified} invariably adopted \citet{islah2009sabr-lmm}'s noncentral chi-squared (NCX2) distribution approximation. However, the failure of the martingale property of the conditional forward price in this approximation has not been critically examined among these earlier schemes. \cite{leitao2017one,leitao2017multi} applied an ad-hoc correction on the simulation prices to enforce the martingale property. Our paper proposes the CEV approximation of the conditional terminal forward price distribution, which is considered to be better since it preserves the martingale property and provides more efficient simulation procedures. 

Indeed, the lack of fast sampling algorithm for the approximated conditional forward price distribution is the last challenge in constructing an efficient SABR simulation algorithm. The inverse transform sampling with numerical root-finding procedure is accurate but tediously slow. The mean-and-variance-match quadratic Gaussian sampling has been adopted in \citet{chen2012low}, \citet{leitao2017one,leitao2017multi}, and \citet{cui2021efficient} to speed up sampling. However, it is limited to small time step due to the nature of approximation. Also, it cannot be used efficiently when the price is close to zero due to the absorbing boundary condition. 

This paper proposes an accurate, efficient, and reliable SABR simulation algorithm, filling the gaps in all steps of the simulation procedure. The contributions of this paper are three-fold. Firstly, we derive the first four moments of the conditional average variance analytically. We then use the shifted lognormal (SLN) distribution obtained from matching the moments for fast sampling of the average variance. Since we match the moments that are in higher order and exact for any arbitrary time step, good accuracy of our SLN distribution approximation can be maintained at larger time step.
Secondly, we present an alternative approximation of the conditional forward  price distribution based on the CEV process. Our approximation shows better accuracy when compared with the widely adopted Islah's approximation \citep{islah2009sabr-lmm} since it preserves the martingale property of the conditional forward price by its construction. Lastly, we employ the algorithms of \citet{makarov2010exact} and \citet{kang2014simulation} that are capable of sampling the CEV process exactly. Their algorithms are fast since the CEV distribution can be expressed by a compound random variable composed of three elementary (namely, gamma-Poisson-gamma) random variables.

This paper is organized as follows. Section~\ref{s:model} presents the formulation of the SABR model and reviews some of its analytic properties. Section~\ref{s:avg_var} describes our SLN approximation for sampling the average variance and presents its performance in numerical accuracy. In Section~\ref{s:cond_price}, we present the CEV distribution approximation of the conditional forward price, and an exact sampling algorithm for the CEV distribution via  the shifted Poisson-mixture Gamma distribution. In particular, we argue that our CEV distribution approximation of the conditional forward price is better than the widely adopted Islah's approximation as it preserves the martingale condition. Section~\ref{s:num} presents our comprehensive numerical experiments that serve to assess accuracy, efficiency and reliability of different simulation schemes and analytic approximation methods, displaying numerical comparisons under the more challenging choices of parameters, such as large time-to-maturity. Comparison of accuracy performance of our CEV approximation of the forward price with that of the Islah's approximation is carefully examined. Conclusive remarks are summarized in Section~\ref{s:conc}.

\section{Formulation of the SABR model and its two-step simulation}\label{s:model}\noindent
In this section, we introduce the SABR model and review some of its analytic properties under various degenerate cases. The two steps of the SABR simulation are outlined. Specifically, we discuss some special cases where the SABR simulation becomes straightforward.

\subsection{SABR model} \noindent
The governing stochastic differential equations (SDEs) for the SABR volatility model~\citep{hagan2002sabr} are given by
\begin{equation}\label{eq:sde}
	\frac{\mathrm{d}F_t}{F_t^\beta} = \sigma_t  \, \mathrm{d}W_t \qtext{and} \frac{\mathrm{d} \sigma_t}{\sigma_t} = \vov \, \mathrm{d}Z_t,\quad 0 \leq \beta \leq 1, %\;\;(\sigma_0 = \sigma)
\end{equation}
where $F_t$ and $\sigma_t$ are the stochastic processes for the forward price and volatility, respectively, $\vov$ is the vol-of-vol, $\beta$ is the elasticity of variance parameter, and $W_t$ and $Z_t$ are the standard Brownian motions (BMs) that are correlated with correlation coefficient $\rho$.
We also define two parameters for notational simplicity:
$$ \rhoc := \sqrt{1-\rho^2} \qtext{and} \betac := 1 - \beta.
$$

\subsection{Average variance and two-step simulation procedures} \noindent
Regarding the simulation over one time step with size $h$, we sample the volatility and forward price at time $t+h$ ($\sigma_{t+\dt}$ and $F_{t+\dt}$) for discrete time step $\dt$ at given time $t$. Accordingly, the filtration up to time $t$ (say, $\sigma_t$ and $F_t$) is implicitly assumed. Since the volatility process $\sigma_t$ in Eq.~\eqref{eq:sde} follows a geometric BM, we have $\sigma_t = \sigma_0\exp\left(\vov Z_t - \vov^2t/2 \right).$ Simulating $\sigma_t$ from time $t$ to $t+\dt$ can be easily performed via
\begin{equation} \label{eq:sig_t}
\sigma_{t+\dt} = \sigma_t\exp\left(\vov Z_h - \frac{\vov^2 h}{2} \right) =  \sigma_t\exp\left(\vovn \hat{Z} \right)\quad
\qtext{where} \hat{Z} \sim \mathcal{N}\left(-\frac{\vovn}{2}, 1\right).
\end{equation}
Here, $\vovn := \vov\sqrt{h}$
%\begin{equation} \label{eq:vovn}
%	\vovn := \vov\sqrt{h}
%\end{equation}
is the standard deviation of $\ln \sigma_t$ over the time step $h$. The quantity $\vovn$ is used as a measure of \textit{small-time} limit.

Like the simulation schemes in other stochastic volatility models, the average variance conditional on initial and terminal volatility values plays an important role in the SABR model simulation. Given the initial and terminal volatility (namely, $\sigma_t$ and $\sigma_{t+\dt}$), we define the dimensionless conditional average variance  between $t$ and $t+h$ as
\begin{equation} \label{eq:normalized}
	\IVstep (\sigma_{t+\dt}) := \frac1{\sigma_t^2 h} \int_{t}^{t+\dt} \sigma_s^2\, \mathrm{d}s \;\Big|_{\sigma_{t+\dt}}.
\end{equation}
Here, we normalize the integrated variance by $\sigma_t^2 h$ in order that $\IVstep$ becomes a dimensionless quantity of order one. Specifically, $\IVstep$ converges to one in the limit of $\vov\downarrow 0$. From the simulation step in Eq.~\eqref{eq:sig_t}, $\IVstep$ is equivalently conditioned by $\hat{Z}$ as
$$ \IVstep(\hat{Z})
%\;\sim\; \frac1{\dt} \int_{0}^{\dt} e^{2\vov Z_s - \vov^2 s}\, \mathrm{d}s \;\Big|_{Z_h = Z \sqrt\dt} 
\;\sim\; \int_{0}^{1} e^{2\vovn Z_s}\, \mathrm{d}s \;\Big|_{Z_1 = \hat{Z}}\;.
$$
For notational simplicity, we simply write $\IVstep$ by omitting the dependence on $\sigma_{t+\dt}$ or $\hat{Z}$.
%The joint distribution of $(\sigma_T,\, \IV)$.
%The terminal forward price conditional on the terminal volatility $\sigma_T$ and terminal realized variance $\IV$ can is given below.

Almost all SABR simulation algorithms strive to perform the following two steps of simulation to achieve efficiency and high accuracy, given the filtration up to $t$:
\begin{description}
	\item[Step 1]: simulation of the average variance $\IVstep$ conditional on $\sigma_{t+\dt}$ (or $\hat{Z}$), 
	\item[Step 2]: simulation of the forward price $F_{t+\dt}$ conditional on $\sigma_{t+\dt}$ and $\IVstep$. 
\end{description}
This paper aims to innovate these two steps of simulation over existing algorithms, details are presented in Sections~\ref{s:avg_var} and \ref{s:cond_price}, respectively.

Regarding Step 2, we define the conditional expectation of $F_{t+\dt}$:
\begin{equation}\label{eq:cond_Ft_h}
	\condFstep(\sigma_{t+\dt}, \IVstep) := E\left(F_{t+\dt} \,|\, \sigma_{t+\dt}, \IVstep\right),
\end{equation}
which is essential for the discussion of the simulation algorithms of Step 2. For notational simplicity, we omit the dependence on $\sigma_{t+\dt}$ and $\IVstep$, and simply write $\condFstep$.

\subsection{Special cases} \label{ss:special} \noindent
%Before introducing our simulation algorithm, we review several special cases of the SABR model. In these cases, Step 2 is relatively easier while Step 1 remains challenging. Reviewing these cases also helps building intuition for our new algorithm.
Before introducing our simulation algorithm, we review several special cases of the SABR model, which helps building intuition for our new algorithm.

\textbf{Normal ($\beta=0$) SABR model:}
When $\beta=0$, conditional on $\sigma_T$ and $\IV$, integrating $F_t$ from $t=0$ to $T$ reveals that the terminal forward price $F_T$ follows the normal distribution with mean $\condF$ and  variance $\rhoc^2\sigma_0^2 T\IV$:
$$ F_T \,|\, \sigma_T, \IV \;\sim\; \mathcal{N}\left(\condF, \rhoc^2 \sigma_0^2 T\IV\right), $$
where the conditional expectation $\condF$ is given by
\begin{equation} \label{eq:condF_norm}
	\condF = F_0 + \frac{\rho}{\vov}\big( \sigma_T - \sigma_0  \big).
\end{equation}
Therefore, Step 2 becomes trivial as long as $\IVstep$ can be sampled accurately in Step 1. \citet{choi2019nsvh} managed to find an exact closed-form expression for sampling $F_T$ without separating Steps 1 and 2. Moreover, note that $F_T$, as well as $\condF$ can be negative under normal distribution as the origin is not a boundary. Accordingly, the normal SABR model is different from the $\beta\downarrow 0$ limit of the SABR model. In the later case, the origin is an absorbing boundary. In view of these subtleties, we discard further discussion of the SABR model under $\beta=0$.

\textbf{Lognormal ($\beta=1$) SABR model:} When $\beta=1$, conditional on $\sigma_T$ and $\IV$, integrating $\ln F_t$ from $t=0$ to $T$ reveals that the terminal forward price $F_T$ follows the LN distribution with mean $\condF$ and log variance $\rhoc^2\sigma_0^2 T\IV$:
\begin{equation} \label{eq:beta1}
	F_T \,|\, \sigma_T, \IV \;\sim\; \condF\, \exp\left(\,\rhoc \sigma_0 \sqrt{T\IV}\, X - \frac{\rhoc^2 \sigma_0^2 T \IV}2 \right),
\end{equation} 
where $X$ is a standard normal variate independent of $\sigma_T$ and $\IV$. The conditional  expectation $\condF$ is explicitly given by
\begin{equation} \label{eq:Ft_cond1}
	\condF = F_0 \exp\left(\frac{\rho}{\vov}(\sigma_T - \sigma_0)-\frac{\rho^2 \sigma_0^2 T \IV}{2} \right).
\end{equation}
Therefore, Step 2 becomes trivial as well when $\beta=1$.

This special case of $\beta=1$ provides the important intuition for our CEV approximation of the forward price under $0<\beta<1$ (to be discussed in Section~\ref{s:cond_price}).
\begin{remark} \label{r:total_exp}
From the law of total expectation, it follows that
\begin{equation*}
	E(\condF) = E\left(E\left(F_T|\sigma_T,\IV\right)\right) = E(F_T) = F_0,
\end{equation*} and the joint distribution of $\sigma_T$ and $\IV$ satisfies
$$
E\left[\exp\left(\frac{\rho}{\vov}(\sigma_T - \sigma_0)-\frac{\rho^2 \sigma_0^2 T \IV}{2}\right)\right] = 1,
$$
for any initial volatility $\sigma_0>0$.
\end{remark}
%\begin{proof}
%It follows from the law of total expectation:
%\begin{equation*}
%	E(\condF) = E\left(E\left(F_T|\sigma_T,\IV\right)\right) = E(F_T) = F_0.
%\end{equation*}
%\end{proof}

\textbf{Zero vol-of-vol ($\vov= 0$) SABR model:} For $0<\beta<1$, the SABR model is now reduced to the CEV model:
\begin{subequations}
	\begin{equation} \label{eq:cev-sde}
		\frac{\mathrm{d}F_t}{F_t^\beta} = \sigma_0 \, \mathrm{d}W_t,
	\end{equation}
	with the absorbing boundary condition at the origin. Here, $\sigma_0$ is constant. Indeed, the SABR model can be visualized as the stochastic volatility version of the CEV model.
	\begin{definition}
		The CEV distribution is defined as the distribution of $F_T$ resulting from the CEV dynamics as depicted in Eq.~\eqref{eq:cev-sde}, which is denoted by
		\begin{equation}\label{eq:def_CEV}
			F_T \sim \cev(F_0, \sigma_0^2T).
		\end{equation}
		%	$$ F_T \sim \cev(F_0, \sigma_0^2T).$$ 
		It is parameterized by the elasticity-of-variance parameter $\beta$, mean $F_0$, and variance $\sigma_0^2 T$. The cumulative distribution function (CDF) and probability density function (PDF) of the CEV distribution are presented respectively in Eqs.~\eqref{eq:cev_CDF} and ~\eqref{eq:cev_PDF} later.
	\end{definition}
\end{subequations}
%\begin{equation} \label{eq:cev-sde}
%	\frac{\mathrm{d}F_t}{F_t^\beta} = \sigma_0 \, \mathrm{d}W_t,
%\end{equation}
%with the absorbing boundary condition at the origin. Here, $\sigma_0$ is constant. Indeed, the SABR model can be visualized as the stochastic volatility version of the CEV model.
%\begin{definition}
%	The CEV distribution is defined as the distribution of $F_T$ resulting from the CEV dynamics depicted in Eq.~\eqref{eq:cev-sde}, denoted by
%	\begin{equation}\label{eq:def_CEV}
%		F_T \sim \cev(F_0, \sigma_0^2T).
%	\end{equation}
%%	$$ F_T \sim \cev(F_0, \sigma_0^2T).$$ 
%	It is parameterized by the elasticity-of-variance parameter $\beta$, mean $F_0$\footnote{If $F_T\sim \cev(F_0, \sigma_0^2 T)$, then $E(F_T) = F_0$ regardless of the values of $\beta$ and $\sigma_0^2 T$.}, and variance $\sigma_0^2 T$. The cumulative distribution function (CDF) and probability density function (PDF) of the CEV distribution are presented in Eqs.~\eqref{eq:cev_CDF} and ~\eqref{eq:cev_PDF} later.
%\end{definition}
If $F_T\sim \cev(F_0, \sigma_0^2 T)$, then $E(F_T) = F_0$ regardless of the values of $\beta$ and $\sigma_0^2 T$. In the zero vol-of-vol ($\vov\downarrow 0$) limit, $F_T$ from the SABR model follows $\cev(F_0, \sigma_0^2T)$ unconditionally. Step 2 is reduced to sampling $F_T$ from the CEV distribution.

\textbf{Uncorrelated ($\rho=0$) SABR model:} The terminal forward price $F_T$ also exhibits a CEV distribution under non-zero vol-of-vol ($\vov>0$) and zero correlation ($\rho=0$) case, albeit conditionally. Conditional on $\IV$, the terminal price $F_T$ follows the CEV distribution:
$$ F_T\,|\, \IV \sim \cev(F_0, \sigma_0^2 T \IV).$$ 
Basically, $T \IV$ plays the role of stochastic time clock of the CEV model. Like the $\vov=0$ case, Step 2 is reduced to sampling $F_T$ from the CEV distribution under $\rho=0$. Unlike the $\beta=1$ case, $F_T$ is only conditioned by $\IV$ since conditional expectation is always $F_0$ without depending on $\sigma_T$. Thus, $\IV$ should be understood as the unconditional average variance in this context. Therefore, the European option price can be expressed as one dimensional integral of the CEV option price over the (unconditional) distribution of $\IV$. As a remark, \citet{choi2021note} approximated the integral using the Gaussian quadrature together with the LN distribution fitted to the (unconditional) mean and variance of $\IV$.

Besides the above four special cases, one has to face with the general case with $0<\beta<1$ and $\rho\neq 0$, under which the conditional distribution of $F_T$ is not analytically tractable. The success of a simulation algorithm for the SABR model amounts to an ingenious choice chosen for the approximation of the conditional distribution of $F_T$.

\section{First simulation step: Sampling average variance $\IVstep$} \label{s:avg_var} \noindent
This section presents our simulation method for sampling $\IVstep$. We also compare with other most popular simulation methods, highlighting the computational advantages of our simulation scheme.
\subsection{Sampling $\IVstep$ from the moment-matched shifted lognormal (SLN) approximation} \label{ss:sigma_vt_new}\noindent
We sample $\IVstep$ from the SLN distribution matched to the first three moments of $\IVstep$.

\begin{definition} The SLN random variable, $Y\sim \mathcal{SLN}(\mu, \sigma^2, \lambda)$, is given by
	$$ Y\sim\mu \left[(1-\lambda) + \lambda \exp\left(\sigma X - \frac{\sigma^2}{2}\right)\right] \qtext{with} \mu>0, \; \sigma>0, \; 0 < \lambda\le 1,$$
	where $X$ is a standard normal variate. Here, $\mu$ is the mean of $Y$, $\lambda$ and $\sigma^2$ are the shifted parameter and variance of the LN component, respectively. When $\lambda=1$, $Y$ is reduced to the LN variable with mean $\mu$ and log variance $\sigma^2$.
	
	The coefficient of variation, skewness, and ex-kurtosis of $Y$ are given by
	$$v(Y) = \lambda \sqrt{w},\quad \text{s}(Y)=\sqrt{w}(w+3), \qtext{and} \text{k}(Y) = w(w^3+6w^2+15w+16),$$
	respectively, where $w=e^{\sigma^2}-1\;(w>0)$. 
\end{definition}
\begin{remark}
	While the skewness and ex-kurtosis of distribution $Y$ follow the conventional definitions, the coefficient of variation is defined as the ratio of the standard deviation to mean:
	$$ v(Y) = \frac{\sqrt{\mathrm{Var}(Y)}}{E(Y)}.$$
	The coefficient of variation is an important dimensionless quantity that characterizes non-negative distributions such as the LN and SLN distributions.
\end{remark}

\begin{proposition}[SLN distribution matched to the first three moments] \label{p:sln}
	Given mean $\mu>0$, coefficient of variance $v>0$, and skewness $s>0$, the parameters of the SLN distribution matching $\mu$, $v$, and $s$ admit the following analytic formulas:
	\begin{gather*}
		\sigma = \sqrt{\ln (1+w)} \qtext{for} w = 4\sinh^2\left(\frac16\acosh\left(1+\frac{s^2}{2}\right)\right), \quad
		\text{and}\quad\lambda=\frac{v}{2\sinh\left(\frac16\acosh\left(1+\frac{s^2}{2}\right)\right)}.		
	\end{gather*}
	% sigma=\sqrt{\ln(1+\frac{v}{\lambda^2\mu^2})}
	% 	$$
\end{proposition}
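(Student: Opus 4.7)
The plan is to solve the three moment-matching equations $E(Y)=\mu$, $\mathrm{Cv}(Y)=v$, and $\mathrm{Sk}(Y)=s$ for the three unknowns $\mu$, $\lambda$, and $\sigma$ (equivalently $w=e^{\sigma^2}-1$). The mean equation immediately pins down $\mu$. Eliminating $\lambda$ via $\lambda=v/\sqrt{w}$ reduces the system to a single equation for $w$ coming from the skewness identity $\sqrt{w}(w+3)=s$, after which $\sigma$ and $\lambda$ follow by back-substitution.

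Squaring the skewness relation gives $w(w+3)^2=s^2$, i.e., $w^3+6w^2+9w-s^2=0$, and I would solve this cubic explicitly for its positive real root. The natural depression $w=u-2$ clears the quadratic term and yields $u^3-3u=s^2+2$. Since $s>0$ implies $s^2+2>2$, the right-hand side sits in the hyperbolic regime (the single-real-root case of Cardano), so the correct ansatz is $u=2\cosh(\theta/3)$: substituting and applying the triple-angle identity $4\cosh^3 x-3\cosh x=\cosh(3x)$ gives $2\cosh\theta=s^2+2$, hence $\theta=\acosh(1+s^2/2)$. Then $w=u-2=2\cosh(\theta/3)-2$, and the elementary identity $\cosh(\theta/3)-1=2\sinh^2(\theta/6)$ rewrites this as
\[
w=4\sinh^2\!\left(\tfrac{1}{6}\acosh\!\left(1+\tfrac{s^2}{2}\right)\right),
\]
which matches the proposition. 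The formula $\sigma=\sqrt{\ln(1+w)}$ then follows by inverting $w=e^{\sigma^2}-1$, and using $\sqrt{w}=2\sinh(\theta/6)$ in $\lambda=v/\sqrt{w}$ produces the stated closed form for $\lambda$.

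Finally I would verify admissibility: for $s>0$ we have $\theta>0$, hence $w>0$ and $\sigma>0$, while $v>0$ yields $\lambda>0$, so the solution lies in the parameter domain of the SLN family (apart from the constraint $\lambda\le 1$, which is a joint feasibility condition on $(v,s)$ for the moment-matching problem rather than part of the algebraic derivation). The main obstacle is really the one non-routine step: recognizing that the skewness equation is a cubic in disguise and choosing the hyperbolic trigonometric substitution (the ordinary cosine form fails because the discriminant puts us in the single-real-root regime); everything else is bookkeeping with $\sinh/\cosh$ identities.
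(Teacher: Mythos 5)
Your proof is correct and follows essentially the same route as the paper: match the skewness first via the cubic $w(w+3)^2=s^2$, depress it with $w=u-2$ (the paper's $x=w+2$), solve by the hyperbolic substitution $u=2\cosh(\theta/3)$ with $\theta=\acosh(1+s^2/2)$, convert to $4\sinh^2(\theta/6)$ via $\cosh\theta-1=2\sinh^2(\theta/2)$, and then back out $\sigma$ and $\lambda=v/\sqrt{w}$. Your added observation that $\lambda\le 1$ is a joint feasibility condition on $(v,s)$ is a useful caveat the paper leaves implicit, but it does not change the argument.
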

\begin{proof}
We first fit $\sigma$ to the given skewness $s$: 
$$s^2=w(w+3)^2 \qtext{for} w = e^{\sigma^2}-1.$$
Substituting $x=w+2$ transforms the equation to a special case of the cubic equation: 
$$ x^3 - 3x - (2+s^2) = 0, $$
where the unique positive root $x$ is analytically given by\footnote{See \url{https://en.wikipedia.org/wiki/Cubic_equation\#Hyperbolic_solution_for_one_real_root}.}
$$ x = 2\cosh\left(\frac13 \acosh\left(1+\frac{s^2}{2}\right)\right).
$$
The expression for $w$ follows from $\cosh \theta - 1= 2\sinh^2\frac{\theta}{2}$.
Next, we fit $\lambda$ to the given coefficient of variance $v$:
$$\lambda = \frac{v}{\sqrt{w}}=\frac{v}{2\sinh\left(\frac16\acosh\left(1+\frac{s^2}{2}\right)\right)}.
$$
\end{proof}
\begin{remark}
	For an LN random variable (i.e., $\lambda=1$), $\sigma$ is determined by matching the coefficient of variation $v$:
	$$ \sigma=\sqrt{\ln\left(1\,+\,v^2\right)}. $$
	\citet{chen2012low} has sampled $\IVstep$ through the LN random variable fitted to the mean and variance of $\IVstep$ in the small-time limit.
\end{remark}

Next, we derive the first four conditional moments of $\IVstep$, whose analytic formulas are presented in Proposition~\ref{p:cond_mom}.
\begin{proposition}[Conditional moments of average variance] \label{p:cond_mom}
Given $\sigma_t$ and $\sigma_{t+\dt}$, the first four raw moments of $\IVstep$ (i.e., $\mu = E(\IVstep)$ and $\mu'_k = E\left((\IVstep)^k\right)$) are given by
%\begin{equation}\label{eq:kennedy_m}
\begin{align*}
	%\label{eq:kennedy_m1}
	\mu &= \left(\frac{\sigma_{t+\dt}}{\sigma_t}\right) m_1,\\
	%\label{eq:kennedy_m2}
	\mu'_2 &= \left(\frac{\sigma_{t+\dt}}{\sigma_t}\right)^2 \frac{1}{\vovn^2} \left(m_2 - c m_1 \right),
	\\
	%\frac1{\vovn^2}\,\left[\frac{N(d_{+2})-N(d_{-2})}{4\vovn \;n(d_{+2})} - \frac{\sigma_t^2 + \sigma_{t+\dt}^2}{2\sigma_t^2} \frac{N(d_{+1})-N(d_{-1})}{2\vovn \;n(d_{+1})}\right],\\
	\mu'_3 &= \left(\frac{\sigma_{t+\dt}}{\sigma_t}\right)^3 \frac{1}{8\vovn^4}\left[3 m_3 - 8c m_2 + \left(4c^2 + 1\right)m_1 \right],	\\
	\mu'_4 &= \left(\frac{\sigma_{t+\dt}}{\sigma_t}\right)^4 \frac{1}{24\,\vovn^6} \left[2 m_4 - 9c  m_3 + \left(12c^2 + 2\right) m_2 - c\left(4c^2 + 3\right)m_1\right],
\end{align*}
%\end{equation}
where the conditioning variable, $\sigma_{t+\dt}$ and $\hat{Z}$, are exchangeable by $\hat{Z} = \frac{1}{\vovn}\ln (\sigma_{t+\dt}/\sigma_t)$, and the coefficients, $c$ and $m_k\, (k=1,2,3,4)$, are functions of $\sigma_t$, $\sigma_{t+\dt}$ and $\hat{Z}$:
$$ 
%\hat{Z} = \frac{\ln (\sigma_{t+\dt}/\sigma_t)}{\vovn}, \quad 
c(\hat{Z}) := \cosh\vovn \hat{Z} = \frac12 \left(\frac{\sigma_{t+\dt}}{\sigma_t} + \frac{\sigma_t}{\sigma_{t+\dt}}\right) \qtext{and}
m_k(\hat{Z}) = \frac{N(\hat{Z}+k\vovn)-N(\hat{Z}-k\vovn)}{2k\vovn\; n\left(\sqrt{\hat{Z}^2+(k\vovn)^2}\right)},
%\qtext{and} d_k = \frac{\ln(\sigma_{t+\dt}/\sigma_t)}{\vovn} + k\, \vovn.
$$
and $n(z)$ and $N(z)$ are the PDF and CDF of the standard normal distribution, respectively.
\end{proposition}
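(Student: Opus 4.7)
The plan is to reduce $\mu'_k$ to an explicit $k$-fold Gaussian integral over an ordered simplex and to evaluate it using substitutions that decouple the integration variables. First, rescale time by $u = (s-t)/\dt$ so that $\sigma_{t+u\dt}/\sigma_t = \exp(\vovn B_u - \vovn^2 u/2)$, where $B_u$ is a standard Brownian motion on $[0,1]$ with $B_0=0$. By \eqref{eq:sig_t} one has $B_1 = \hat{Z} + \vovn/2$, so conditioning on $\sigma_{t+\dt}$ is equivalent to conditioning on $B_1$. Expanding the $k$-th power of $\IVstep = \int_0^1 e^{2\vovn B_u - \vovn^2 u}\,du$, using symmetry to restrict to the ordered simplex $\Delta_k = \{0 < u_1 < \cdots < u_k < 1\}$ with combinatorial factor $k!$, and applying the Brownian-bridge identities $E[B_u|B_1=b]=ub$ and $\mathrm{Cov}(B_{u_i},B_{u_j}|B_1=b)=u_i(1-u_j)$ for $u_i \le u_j$, the conditional joint moment generating function of $\sum_i 2\vovn B_{u_i}$ evaluates in closed form. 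Substituting $b = \hat{Z}+\vovn/2$ and collecting terms yields the master formula
\[
\mu'_k = k!\int_{\Delta_k}\exp\!\Bigl(2\vovn\hat{Z}\,S + 2\vovn^2\sum_{l=1}^k (2(k-l)+1)u_l - 2\vovn^2 S^2\Bigr)du, \qquad S = \sum_{l=1}^k u_l.
\]

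For $k=1$ this is a one-dimensional Gaussian integral over $[0,1]$; completing the square in $u_1$ gives a Gaussian-tail difference $N(\hat{Z}+\vovn)-N(\hat{Z}-\vovn)$ multiplied by the prefactor $e^{(\hat{Z}+\vovn)^2/2}\sqrt{2\pi}/(2\vovn)$, and factoring $e^{\vovn\hat{Z}} = \sigma_{t+\dt}/\sigma_t$ together with $\sqrt{2\pi}\,e^{(\hat{Z}^2+\vovn^2)/2} = 1/n(\sqrt{\hat{Z}^2+\vovn^2})$ reproduces $\mu = (\sigma_{t+\dt}/\sigma_t)\,m_1$. The same completing-the-square calculation, applied with the rescaling $\vovn \to k\vovn$, shows that $m_k$ admits the equivalent integral representation
\[
m_k = \int_0^1\exp\!\bigl(2u(1-u)k^2\vovn^2 + (2u-1)k\vovn\hat{Z}\bigr)\,du, \qquad k \ge 1,
\]
which serves as the signature used to identify $m_k$-contributions in the higher-moment computations: any one-dimensional integral over $[0,1]$ whose exponent has this form with effective scale $k\vovn$ evaluates to $m_k$.

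For $k=2$ the coupling $e^{-2\vovn^2 S^2}$ is disentangled by the linear substitution $s=u_1+u_2$, $r=u_2-u_1$ (Jacobian $1/2$); the $r$-integral is elementary and produces $(1 - e^{-2\vovn^2 r_{\max}(s)})/(2\vovn^2)$ with $r_{\max}(s) = \min(s, 2-s)$. Translating $w = s-1$ and folding at $|w|$, the resulting $w$-integral splits into a ``main'' Gaussian over $(-1,1)$ at scale $2\vovn$ (matching $m_2$) and two ``boundary'' Gaussians over $(0,1)$ at scale $\vovn$ (matching $\mu$ evaluated at $\pm\hat{Z}$); combining the boundary terms via $e^{\vovn\hat{Z}} + e^{-\vovn\hat{Z}} = 2c$ yields the stated $\mu'_2$. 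The same strategy applies for $k=3,4$ using hierarchical substitutions, one for $S$ and $k-1$ for the differences, which render the exponent linear in the difference variables; after those linear integrations over the transformed simplex, one ``main'' Gaussian contribution identifies with $m_k$ and nested ``boundary'' contributions identify with $m_j$ for $j<k$, whose coefficients, built from sums of $e^{\pm j\vovn\hat{Z}}$, reduce to polynomials in $c$. The main obstacle is purely bookkeeping---for $k=4$ one must track all sign patterns generated by the nested folds of the integration region and verify that the $m_j$ coefficients collapse to the polynomials $2$, $-9c$, $12c^2+2$, $-c(4c^2+3)$ stated in the proposition---a long but routine computation that involves no new conceptual ingredient beyond the $k=2$ case.
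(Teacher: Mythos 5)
Your derivation is correct as far as it goes, and it is a genuinely different route from the paper's. The paper never expands $(\IVstep)^k$ as a $k$-fold integral over the simplex: it invokes the Matsumoto--Yor closed-form conditional moment formula $E\left[(A_t)^n \mid Z_t=x\right]=\frac{e^{nx}}{n!\,t}\int_x^\infty b\,e^{(x^2-b^2)/2t}\left[\cosh b-\cosh x\right]^n db$ for the exponential functional $A_t=\int_0^t e^{2Z_s}ds$, which collapses every moment to a \emph{single} one-dimensional integral from the outset. After a change of variables and one integration by parts, the binomial expansion of $\sinh(\vovn s)\left[\cosh(\vovn s)-\cosh(\vovn\hat{Z})\right]^{k-1}$ combined with the product-to-sum identities for $\sinh(x)\cosh^j(x)$ and the identity $\int_z^\infty e^{-s^2/2}\sinh(as)\,ds=\left[N(z+a)-N(z-a)\right]/2n(a)$ produces the $m_j$ terms and their polynomial-in-$c$ coefficients mechanically. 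Your Brownian-bridge master formula is a correct representation of $\mu'_k$ (the exponent $2\vovn\hat{Z}S+2\vovn^2\sum_l(2(k-l)+1)u_l-2\vovn^2S^2$ checks out, as does your integral representation of $m_k$, which is equivalent to the definition in the statement), and your $k=1,2$ evaluations reproduce the stated formulas exactly. What each approach buys: yours is self-contained and makes the probabilistic origin of each $m_j$ contribution transparent (main versus folded-boundary pieces of the simplex), whereas the paper borrows a nontrivial known result and in exchange reduces $k=3,4$ to a few lines of hyperbolic algebra.

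The one substantive reservation is that $k=3$ and $k=4$ --- the actual new content of the proposition, since $\mu$ and $\mu'_2$ were already known from \citet{kennedy2012prob} --- are only asserted to follow by ``routine bookkeeping.'' The nested folds of the three- and four-dimensional simplex and the resulting sign patterns are precisely where an error would creep in, and the coefficients $2$, $-9c$, $12c^2+2$, $-c(4c^2+3)$ are not guessable from the $k=2$ pattern. To turn this into a proof you must either carry out that bookkeeping explicitly, or first integrate out the $k-1$ difference variables in closed form for general $k$ so as to recover a one-dimensional integral of the Matsumoto--Yor type, after which the hyperbolic identities finish the job as in the paper.
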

\begin{proof}
	The analytic derivation is based on the conditional moments of the exponential functional of Brownian motion~\citep[(5.4)]{matsuyor2005exp1}. See \ref{apdx:cond_mom} for the derivation.
\end{proof}
\begin{remark}
	\citet[Eqs.~(7) and (8)]{kennedy2012prob} have derived $\mu$ and $\mu'_2$ independently without resorting to \citet{matsuyor2005exp1}.
\end{remark}
\begin{remark} \label{r:mom}
	From the first four raw moments, the coefficient of variation $v$, skewness $s$, and ex-kurtosis $\kappa$ of $\IVstep$ are obtained as
	$$v(\IVstep) = \frac{\sqrt{\mu_2}}{\mu}, \quad s(\IVstep) = \frac{\mu'_3 - 3\mu \mu'_2 + 2\mu^3}{\mu_2\sqrt{\mu_2}} ,\qtext{and} \kappa(\IVstep) = \frac{\mu'_4 - 4\mu\mu'_3 + 6\mu^2\mu'_2 - 3\mu^4}{\mu_2^2} - 3,
	$$
	where $\mu_2 = \mu'_2 - \mu^2$ is the variance.
\end{remark}

From Propositions~\ref{p:sln} and \ref{p:cond_mom}, it is possible to sample $\IVstep$ from the SLN variable whose parameters are fitted to the first three moments of $\IVstep$ exactly. However, we take a much simpler computational approach that uses a fixed $\lambda$ obtained from the small-time limit. It is shown to be effective as well (see Algorithm~\ref{alg:alg1_IV}). 
%
%\begin{proposition}[SLN in the small-time limit] \label{p:sln56}
%In the small-time limit ($\vovn\downarrow 0$), $\IVstep$ is approximately sampled via an SLN variable with $\lambda=5/6$:
%\begin{equation}
%\label{eq:shifted_sigma}
%\IVstep \sim \frac{\mu}{6} \left[1 +  5 \exp\left(\sigma X - \frac{\sigma^2}{2}\right)\right]
%\qtext{with} \sigma=\sqrt{\ln\left(1+\frac{36}{25}v^2\right)},
%\end{equation}
%where $\mu$ and $v$ are the mean and coefficient of variance of $\IVstep$ given by Proposition~\ref{p:cond_mom} and Remark~\ref{r:mom}.
%\end{proposition}
%\begin{proof}
%In \ref{apdx:cond_mom}, the coefficient of variation and skewness of $\IVstep$ are expanded around $\vovn=0$ as 
%$$ v = \frac{\sqrt{\mu_2}}{\mu} = \frac{\vovn}{\sqrt{3}} + O(\vovn^3)\qtext{and} s = \frac{\mu_3}{\mu_2\sqrt{\mu_2}} = \frac{6\sqrt{3}}{5} \vovn + O(\vovn^3).
%$$
%Using Taylor's expansion, we obtain
%$$ 2\sinh\left(\frac16\acosh\left(1+\frac{s^2}{2}\right)\right)=\frac{s}{3}-\frac{s^3}{81}+O(s^5),$$
%so that the shift parameter $\lambda$ converges to
%$$ \lambda=\frac{v}{2\sinh\left(\frac16\acosh\left(1+\frac{s^2}{2}\right)\right)}\to \frac56 \qtext{as} \vovn\downarrow 0.$$
%Once $\lambda$ is determined, we fit $\sigma$ to match the coefficient of variation:
%$$ \sigma=\sqrt{\ln\left(1+\frac{v^2}{\lambda^2}\right)}=\sqrt{\ln\left(1+\frac{36}{25}v^2\right)}.$$
%As we demonstrate in the numerical results shown in Figure~\ref{f:moments}, this approximation is highly reliable not only for small $\vovn$ but even for $O(\vovn)=1$.
%\end{proof}
In \ref{apdx:cond_mom}, the coefficient of variation and skewness of $\IVstep$ are expanded around $\vovn=0$ as 
$$ v = \frac{\sqrt{\mu_2}}{\mu} = \frac{\vovn}{\sqrt{3}} + O(\vovn^3)\qtext{and} s = \frac{\mu_3}{\mu_2\sqrt{\mu_2}} = \frac{6\sqrt{3}}{5} \vovn + O(\vovn^3),\quad\text{where }\vovn=\vov\sqrt{h}
$$
Using Taylor's expansion, we obtain
$$ 2\sinh\left(\frac16\acosh\left(1+\frac{s^2}{2}\right)\right)=\frac{s}{3}-\frac{s^3}{81}+O(s^5),$$
so that the shift parameter $\lambda$ converges to
$$ \lambda=\frac{v}{2\sinh\left(\frac16\acosh\left(1+\frac{s^2}{2}\right)\right)}\to \frac56 \qtext{as} \vovn\downarrow 0.$$
Once $\lambda$ is determined, we fit $\sigma$ to match the coefficient of variation $v$:
$$ \sigma=\sqrt{\ln\left(1+\frac{v^2}{\lambda^2}\right)}=\sqrt{\ln\left(1+\frac{36}{25}v^2\right)}.$$
As we demonstrate in the numerical results shown in Figure~\ref{f:moments}, this approximation is highly reliable not only for small $\vovn$ but even for $O(\vovn)=1$.

We summarize the above results into Algorithm~\ref{alg:alg1_IV} for approximating conditional $\IVstep$ as follows:

%original Choi version of algorithm
%\vspace{1em}
%\begin{algorithm}[H] 
%	%\setlength{\abovedisplayskip}{10pt}
%	%\setstretch{1}
%	\caption{Exact simulation of the CEV distribution $F_T$ in Eq.~\eqref{eq:cev-sde}} \label{alg:alg1}
%	\KwIn{$F_0$, $\sigma_0$, $\beta$, $T$}
%	\KwOut{$F_T$}
%	$\alpha\gets \frac1{2\betac}$ and $z_0 \gets \frac{F_0^{2\betac}}{\betac^2\sigma_0^2 T}$\;
%	$X \gets \GAM\left(\alpha\right)$\;
%	\eIf{$X\ge z_0/2$} 
%	{
%		\KwRet{$F_T \gets 0$}\;
%	}{
%		$z_T \gets 2\,\GAM(\POIS(z_0/2-X) + 1)$\;
%		\KwRet{$F_T \gets \left(\betac^2\sigma_0^2Tz_T \right)^{1/2\betac}$}\;
%	}
%\end{algorithm}
%\vspace{1em}

\vspace{1em}
\begin{algorithm}[H] %\label{p:sln56}
	\caption{Approximation of the conditional $\IVstep$} \label{alg:alg1_IV}
	In the small-time limit ($\vovn\downarrow 0$), $\IVstep$ is approximately sampled via an SLN variable with $\lambda=5/6$:
	\begin{equation}
		\label{eq:shifted_sigma}
		\IVstep \sim \frac{\mu}{6} \left[1 +  5 \exp\left(\sigma X - \frac{\sigma^2}{2}\right)\right]
		\qtext{with} \sigma=\sqrt{\ln\left(1+\frac{36}{25}v^2\right)},
	\end{equation}
	where $\mu$ and $v$ are the mean and coefficient of variance of $\IVstep$ given by Proposition~\ref{p:cond_mom} and Remark~\ref{r:mom}.
\end{algorithm}
\vspace{1em}

%\begin{figure}[H]
%	\includegraphics[width=0.32\linewidth]{}
%	\includegraphics[width=0.32\linewidth]{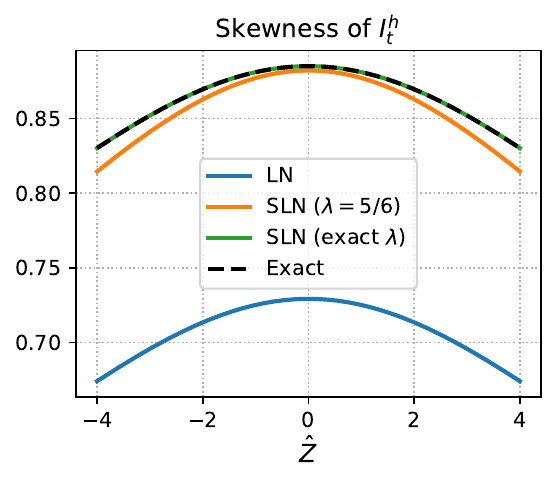}
%	\includegraphics[width=0.32\linewidth]{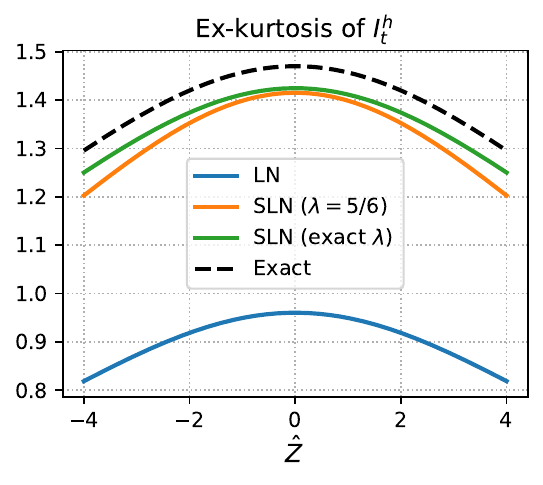}
%	\caption{Variance (left), skewness (middle), and Ex-kurtosis (right) of $\IVstep$ as functions of $\hat{Z}$ for $\vovn = 0.4$. We compare our SLN approximation with $\lambda=5/6$ (yellow), the SLN approximation fitted up to skewness (green), and the LN approximation fitted up to variance (blue) to the true values (dashed).} 
%	\label{f:moments}
%\end{figure}
\begin{figure}[H]
	\includegraphics[width=0.5\linewidth]{cond-avgvar-skewness.PDF}
	\includegraphics[width=0.5\linewidth]{cond-avgvar-exkurtosis.PDF}
	\caption{Comparison of accuracy of calculating the skewness (left) and ex-kurtosis (right) of $\IVstep$ as functions of $\hat{Z}$ for $\vovn = 0.4$. We compare our SLN approximation with $\lambda=5/6$ (orange), the SLN approximation fitted up to skewness (green), and the LN approximation fitted up to variance (blue) to the true values (dashed).} 
	\label{f:moments}
\end{figure}

\subsection{Comparison to other simulation methods in literature}\label{ss:compare_IV}\noindent
Our sampling method for the approximation of conditional $\IVstep$ is much faster and easier to implement compared to other methods in the literature, such as \citet{chen2012low}, \citet{cai2017sabr}, \citet{leitao2017one,leitao2017multi} and \citet{cui2021efficient}.

Our method is considered as an extension of \citet{chen2012low}. Their LN approximation is matched up to mean and variance of $\IVstep$, while our approximation is matched up to the first three moments. Furthermore, they derive the mean and variance only in the small-time limit, so the time step has to be chosen to be small accordingly. We enhance the method by deriving the moments analytically and matching to higher moments with the SLN approximation. Figure~\ref{f:moments} displays accuracy of skewness and ex-kurtosis of $\IVstep$ obtained from our SLN approximation. The left figure shows that the SLN approximation (green curve) gives the correct skewness value across the dependency of $\hat{Z}$ since the parameters are fitted up to skewness. Moreover, the skewness under the simple SLN scheme with $\lambda=5/6$ (orange curve) also agrees well. Even though the two SLN schemes do not use ex-kurtosis in the parameter calibration, the SLN approximation of the ex-kurtosis of $\IVstep$ is close to the true value. This justifies that our chosen SLN approximation (either exact $\lambda$ or $\lambda=\frac56$) is a good choice of approximation of $\IVstep$ (see right figure). As evidenced from the plots in Figure~\ref{f:moments}, the SLN approximation (either exact $\lambda$ or $\lambda=\frac56$) outperforms the LN approximation.

\citet{cai2017sabr} computed the analytic form of  the Laplace transform of the CDF of the reciprocal of the average variance 1/$\IVstep$, then used the Euler inversion formula to obtain the CDF of 1/$\IVstep$. After obtaining the CDF, they used the inverse transform method, which requires some tedious root-finding procedures such as Newton's method or bisection method to sample $\IVstep$. 

\citet{leitao2017one,leitao2017multi} used recursion schemes to find the characteristic function of the logarithm of $\IVstep$ and then employed the inverse Fourier technique COS method \citep{fang2008novel} to recover its PDF. Next they computed the Pearson correlation coefficient of $\ln\IVstep$ and $\ln\sigma_{t+\dt}$, then the bivariate copula distribution based on the CDF of $\ln\sigma_T$ and approximate CDF of $\ln\IVstep$. Finally, they adopted the direct inverse transform method based on linear interpolation to sample $\IVstep$. 

\citet{cui2021efficient} used the continuous-time Markov chain (CTMC) in the space of the volatility process with finite number of states. In their procedure, they first fixed the volatility grids, computed the transition probability density at these grids based on the exponential of the generator matrix and used the inverse transform method to sample the volatility. After obtaining the volatility values, they computed the conditional characteristic function of the integrated variance, again based on the exponential of the generator matrix. They then adopted the Fourier sampler to get the CDF of the integrated variance, by first computing the PDF of the integrated variance via the orthogonal projection. Finally, they simulated the integrated variance using the closed-form inverse function. 

Besides the issue of accuracy of the simulation of $\IVstep$, these simulation methods invariably include tedious and time-consuming procedures such as the adoption of the inverse Laplace/Fourier transform calculation, root-finding steps in the inverse transform methods or computation of the exponential of matrices. However, our SLN simulation methods for $\IVstep$ exhibits easy and fast implementation. As shown in the numerical experiments reported in Section~\ref{s:num}, our simulation methods for approximating $\IVstep$ demonstrate significant savings in computational time and high level of accuracy.

\section{Second simulation step: Sampling forward price $F_{t+\dt}$} \label{s:cond_price}\noindent
This section presents our CEV approximation for the forward price $F_{t+\dt}$ and the sampling of the CEV distribution via the exact simulation scheme of the shifted Poisson mixture gamma variate.
\subsection{CEV approximation of the conditional forward price}\label{ss:our}\noindent
%Proposition~\ref{p:cev_approx} proposes our approximation of $F_{t+\dt}$ for the general case. Our choice of the approximation is a very natural and intuitive extension of the special cases discussed in Section.~\ref{ss:special}. The only technical component is the choice of the conditional terminal forward price that we adopt from Remark~\ref{r:total_exp}.
In the uncorrelated SABR model ($\rho=0$), we observe that the conditional forward price $F_T$ follows the CEV distribution. To construct our SABR simulation scheme, we propose the CEV approximation of the conditional $F_T$ under the general case of $\rho\neq 0$ and $0<\beta<1$. We employ the operator splitting technique to derive the CEV approximation. Recall that a similar operator splitting approach has been used in the design of the simulation schemes for the rough Heston model in \cite{BayerBreneis2024}.

We rewrite the SDEs for the SABR volatility model [See Eq.~\eqref{eq:sde}] into the following form:
\begin{equation}\label{eq:sde_new}
	\frac{\text{d}F_t}{F_t^{\beta}}=\sigma_t(\rho\,\text{d}Z_t+\sqrt{1-\rho^2}\,\text{d}Z_t^{\perp})\quad\text{and}\quad\frac{\text{d}\sigma_t}{\sigma_t}=\vov\,\text{d}Z_t,
\end{equation}
where $Z_t$ and $Z_t^{\perp}$ are uncorrelated BMs. Similar to the operator splitting approximation method in \cite{LileikaMackevicius2021}, we split the above SDEs into two parts:
\begin{subequations}
	\begin{equation}
		\frac{\text{d}F_t^{(1)}}{[F_t^{(1)}]^{\beta}}=\rho\sigma_t\,\text{d}Z_t\quad\text{and}\quad\frac{\text{d}\sigma_t}{\sigma_t}=\vov\,\text{d}Z_t
		\label{eq:splitting_first_step}
	\end{equation}
	\text{and}
	\begin{equation}
		\frac{\text{d}F_t^{(2)}}{[F_t^{(2)}]^{\beta}}=\sqrt{1-\rho^2}\sigma_t\,\text{d}Z_t^{\perp}\quad\text{and}\quad\frac{\text{d}\sigma_t}{\sigma_t}=\vov\,\text{d}Z_t.
		\label{eq:splitting_second_step}
	\end{equation}
\end{subequations}
The operator splitting approximation method dictates that an approximation to the solution of $F_t$ in Eq.~\eqref{eq:sde_new} with initial condition $F_0$ is given by $F_t\sim F_t^{(2)}\left(F_t^{(1)}(F_0)\right)$, where $F_t^{(1)}(F_0)$ is the solution of $F_t^{(1)}$ in Eq.~\eqref{eq:splitting_first_step} with the initial condition $F_0^{(1)}=F_0$ and $F_t^{(2)}\left(F_t^{(1)}(F_0)\right)$ is the solution of $F_t^{(2)}$ in Eq.~\eqref{eq:splitting_second_step} with the initial condition $F_0^{(2)}=F_t^{(1)}(F_0)$.

Next, we solve the simplified set of SDEs for $F_t^{(1)}$ and $F_t^{(2)}$. The SDEs in Eq.~\eqref{eq:splitting_second_step} resembles the SABR model with $\rho=0$ so that $F_{t}^{(2)}|\,\sigma_{t},I_{0}^t$ follows the CEV distribution, where
\begin{equation}\label{eq:F_t_2}
	F_{t}^{(2)}|\,\sigma_{t},I_{0}^t \;\sim\; \cev(F_0^{(2)},\rhoc^2\sigma_0^2 \dt I_0^t),\quad\rhoc^2=1-\rho^2.
\end{equation}
However, there is no simple analytic solution to Eq.~\eqref{eq:splitting_first_step} for $\rho\neq 0$ and $0<\beta<1$. We propose to employ the frozen coefficient approximation to approximate $[F_t^{(1)}]^{\betac}$ by $F_0^{\betac}$ such that Eq.~\eqref{eq:splitting_first_step} is approximated by
% $\frac{\text{d}F_t^{(1)}}{F_t^{(1)}}=\frac{\rho\sigma_t}{F_t^{(1)\betac}}\,\text{d}Z_t\approx\frac{\rho\sigma_t}{F_0^{(1)\betac}}\,\text{d}Z_t$
\begin{equation}\label{eq:cev_gbm}
	\frac{\text{d}F_t^{(1)}}{F_t^{(1)}}=\frac{\rho\sigma_t}{[F_t^{(1)}]^{\betac}}\,\text{d}Z_t\approx\frac{\rho\sigma_t}{F_0^{\betac}}\,\text{d}Z_t.
\end{equation}
Let $F_t^{(a)}$ denote the solution of the above approximating SDE, which is considered as an approximation of $F_t^{(1)}$. By rewriting the associated SDE for $F_t^{(a)}$ into the form:
\begin{equation*}
	\text{d}\ln F_t^{(a)}=\frac{\rho}{\vov F_0^{\betac}}\,\text{d}\sigma_t-\frac{\rho^2\sigma_t^2}{2F_0^{2\betac}}\,\text{d}t,
\end{equation*}
the solution of $F_t^{(a)}$ can be easily found by direct integration, which gives
\begin{equation}\label{eq:first_appr_split} 			   F_t^{(a)}(F_0)=F_0\exp\left(\frac{\rho}{\vov F_0^{\betac}}(\sigma_t-\sigma_0)-\frac{\rho^2\sigma_0^2 t I_0^t}{2F_0^{2\betac}}\right).
\end{equation}
%The operator splitting method combines the solution the solution operators in Eqs.~\eqref{eq:splitting_first_step} and ~\eqref{eq:splitting_second_step}. That is, we derive the formula of $F_t^{(1)}$ with initial value $F_0^{(1)}=F_0$ in the first step Eq.~\eqref{eq:splitting_first_step} , then use $F_t^{(1)}$ to be the initial value $F_0^{(2)}$ in Eq.~\eqref{eq:splitting_second_step} to get $F_t^{(2)}$. Finally, the value of $F_t^{(2)}$ is our approximation of the conditional terminal forward price $F_t$ in the SABR model. Hence, our approximation is
By combining the above operator splitting procedure coupled with frozen coefficient approximation, the approximation solution of $F_t$ is obtained as $F_t^{(2)}\left(F_t^{(a)}(F_0)\right)$. Conditional on $\sigma_t$ and $I_0^t$, we obtain the CEV approximation of the conditional terminal forward price $F_t$ in the SABR model as follows:
\begin{subequations}
	\begin{equation}\label{eq:cev_approximation_cev}
		F_{t}\,|\, \sigma_{t}, I_0^t \;\sim\; \cev(\bar{F}^t_0, \rhoc^2\sigma_0^2 t I_0^t),
	\end{equation}
	where
	\begin{equation}\label{eq:conditional_mean_F_T_bar}
		\bar{F}^t_0 = F_0\exp\left(\frac{\rho}{\vov F_0^{\betac}}(\sigma_t-\sigma_0)-\frac{\rho^2\sigma_0^2 t I_0^t}{2F_0^{2\betac}}\right).
	\end{equation}
\end{subequations}

The above discussion considers the approximation of solution of $F_t$ with initial time zero and end time $t$. In our context, we solve for $F_{t+h}$ with initial time $t$ and end time $t+h$. Accordingly, we change the initial time zero to time $t$ (comparing Eqs.~(\ref{eq:cev_approximation_cev}, \ref{eq:conditional_mean_F_T_bar}) with Eqs.~(\ref{eq:beta01}, \ref{eq:Ft_cond_new})). The CEV approximation of the conditional forward price $F_{t+\dt}\,|\, \sigma_{t+\dt}, \IVstep$ is summarized in Algorithm~\ref{alg:alg_cev}:

\vspace{1em}
\begin{algorithm}[H] %\label{p:sln56}
	\setlength{\abovedisplayskip}{10pt}
	\setstretch{1}
	\caption{CEV approximation of the conditional terminal forward price} \label{alg:alg_cev}
	For $0<\beta<1$ and $-1 \le \rho\le 1$, conditional on $\sigma_{t+\dt}$ and  $\IVstep$ (filtration up to time $t$), we approximate $F_{t+\dt}$ as a CEV distribution as follows:
	\begin{subequations}
		\begin{equation} \label{eq:beta01}
			F_{t+\dt}\,|\, \sigma_{t+\dt}, \IVstep \;\sim\; \cev(\condFstep, \rhoc^2\sigma_t^2 \dt \IVstep),
		\end{equation}
		where the conditional expectation $\condFstep$ is approximated by
		\begin{equation} \label{eq:Ft_cond_new}
			\condFstep \approx F_t \exp\left(\frac{\rho(\sigma_{t+\dt} - \sigma_t)}{\vov F_t^{\betac}}-\frac{\rho^2 \sigma_t^2 \dt \IVstep}{2 F_t^{2\betac}} \right).
		\end{equation}
	\end{subequations}
%	\begin{equation} \label{eq:beta01}
%		F_{t+\dt}\,|\, \sigma_{t+\dt}, \IVstep \;\sim\; \cev(\condFstep, \rhoc^2\sigma_t^2 \dt \IVstep),
%	\end{equation}
%	where the conditional expectation $\condFstep$ is approximated by
%	\begin{equation} \label{eq:Ft_cond_new}
%		\condFstep \approx F_t \exp\left(\frac{\rho(\sigma_{t+\dt} - \sigma_t)}{\vov F_t^{\betac}}-\frac{\rho^2 \sigma_t^2 \dt \IVstep}{2 F_t^{2\betac}} \right).
%	\end{equation}
\end{algorithm}
\vspace{1em}

Note that our choice of $\condFstep$ ensures the martingale property of conditional terminal forward price $F_{t+h}$. 
Based on Remark~\ref{r:total_exp}, we observe
\begin{equation}\label{eq:F_T_bar}
	E(F_{t+\dt}) = E\left(E(F_{t+\dt}|\sigma_{t+\dt},\IVstep)\right) = E(\condFstep) = F_t,
\end{equation}
when $F_{t+\dt}$ is sampled according to Algorithm~\ref{alg:alg_cev}. The satisfaction of the martingale condition significantly improves the accuracy of our simulation algorithm (see numerical results illustrated in Figure~\ref{f:Islah_comparison}).
\subsection{Sampling conditional terminal forward price from a CEV distribution} \label{ss:cev_sample} \noindent
With availability of analytic distribution function, the slow root-finding step is often used to invert the underlying distribution function. This approach has been adopted by some existing SABR simulation methods (e.g., \citet{cai2017sabr}). Our SABR simulation algorithm would be completed when our CEV approximation of the forward price in Algorithm~\ref{alg:alg_cev} is paired with an efficient sampling algorithm of the CEV distribution. It is fortunate that we can use the algorithms of \citet{makarov2010exact} and \citet{kang2014simulation} for the exact simulation of the CEV distribution. Despite its simplicity and efficiency, this exact simulation scheme of the CEV distribution is not widely known in the literature. First, we present the relevant probability distributions that are related to the CEV distribution and several related analytic properties. 
\begin{definition}[Noncentral chi-squared random variable]
	Let $\NCX(\delta, r)$ be the NCX2 random variable with degree of freedom $\delta$ and noncentrality $r$. The PDF and CDF of $\NCX(\delta, r)$ are respectively given by
	\begin{equation}
		f_\NCX(x\,;\,\delta,r) = 
		\frac12\left(\frac{x}{r}\right)^{\alpha/2}
		I_\alpha(\sqrt{x\,r})\; e^{-(x+r)/2} \qtext{and}
		P_\NCX(x;\delta,r)=\int_{0}^{x} f_\NCX(t\,;\,\delta,r)\, \mathrm{d}t,
	\end{equation}
	where $\alpha = \delta/2 - 1$ and $I_\alpha(z)$ is the modified Bessel function of the first kind, 
	$$ I_\alpha(z) = \sum_{k=0}^{\infty} \frac{(z/2)^{\alpha + 2k}}{k! \,\Gamma(k+\alpha+1)}.
	$$
\end{definition}

\begin{remark}[Relation to the gamma distribution] \label{r:ncx2_gamma}
	The central chi-squared (corresponding to $r=0$) distribution is a gamma distribution. Let $\GAM(\alpha)$ be a gamma random variable with shape parameter $\alpha$ and unit scale parameter, the two random variables are related by
	$$ \NCX(\delta, 0) \sim 2\, \GAM(\delta/2).
	$$
	The PDF and CDF of $\GAM(\alpha)$ are respectively given by~\footnote{The CDF of $\GAM(\alpha)$, $P_{\GAM}(x;\alpha)$, is the scaled lower incomplete gamma function $\gamma(\alpha, x)$:
		$$ P_{\GAM}(x;\alpha) = \frac{\gamma(\alpha, x)}{\Gamma(\alpha)}  \qtext{where}
		\gamma(\alpha,x) = \int_{0}^{x}t^{\alpha-1}e^{-t}\,\mathrm{d}t.$$
	}
	\begin{equation}\label{eq:incomplete_gamma}
		f_{\GAM}(x; \alpha) = \frac1{\Gamma(\alpha)} x^{\alpha-1}e^{-x} \qtext{and}
		P_{\GAM}(x;\alpha) = \frac1{\Gamma(\alpha)} \int_{0}^{x}t^{\alpha-1}e^{-t}\,\mathrm{d}t.
	\end{equation}
\end{remark}

\begin{definition}[Shifted Poisson variate]
	The shifted Poisson (SP) variate with intensity $\lambda$ and shift parameter $\alpha$, $N\sim\SP(\lambda,\alpha)$, takes non-negative integer values with mass probability function:
	\begin{equation}\label{eq:SP}
		P_\SP (n;\lambda, \alpha) = \mathrm{Prob}(N=n) = \frac{1}{P_{\GAM}(\lambda;\alpha)} \frac{\lambda^{\alpha+n}\,e^{-\lambda}}{\Gamma(n+\alpha+1)}.
	\end{equation}
\end{definition}
\begin{remark}[Relation to the Poisson distribution]
	When the shift parameter $\alpha$ is zero, the SP distribution is reduced to the Poisson distribution with intensity $\lambda$, $\POIS(\lambda)$, with mass probability function:
	$$		P_\POIS (n;\lambda) = \frac{\lambda^{n}\,e^{-\lambda}}{n!}.
	$$	
\end{remark}
\begin{remark}
	The SP variate arises from the series expansion of $P_{\GAM}(x;\alpha)$~\citep[6.5.29]{abramowitz}:
	$$
	1 = \frac1{P_{\GAM}(\lambda;\alpha)} \frac{\gamma(\alpha, \lambda)}{\Gamma(\alpha)} = \frac1{P_{\GAM}(\lambda;\alpha)}\sum_{n=0}^{\infty}\frac{\lambda^{\alpha+n}\,e^{-\lambda}}{\Gamma(n+\alpha+1)} = \sum_{n=0}^{\infty} P_\SP(n;\lambda, \alpha).$$
\end{remark}

Next, we discuss the connection of the above defined distributions to the CEV distribution. It is well known that the transition density of $F_T$ is closely related to the NCX2 distribution~\citep{schroder1989comp,cox1996cev}. We define $z_T$ be a transformation of the CEV process $F_T$ (see Eq.~\eqref{eq:def_CEV}) by the function $z(\cdot)$:
\begin{equation}\label{eq:z_0}
	z_T := z(F_T)
	\qtext{and} z(y) := \frac{y^{2\betac}}{\betac^2\sigma_0^2 T}.
	%\qtext{and} z_K = \frac{K^{2\betac}}{\betac^2\sigma_0^2 T},
\end{equation}
The PDF of $z_T > 0$ is expressed in terms of the NCX2 distribution:
$$ 	f_{z_T}(z;z_0)\, \mathrm{d}z = \mathrm{Prob}(z_T \in [z, z+\mathrm{d}z]) = f_\NCX\left(z_0; \frac{1}{\betac}+2 , z\right) \mathrm{d}z\quad (z>0).$$
Since the location variable $z$ plays the role of the noncentrality parameter, the distribution of $z_T$ is not a NCX2 distribution. 
The complementary CDF of $F_T>0$ and the probability of absorption (probability mass at zero) are respectively given by
\begin{subequations} 
	\begin{equation}\label{eq:cev_CDF}
		\mathrm{Prob}(F_T > y) = \mathrm{Prob}(z_T > z(y)) = P_\NCX\left(z_0; \frac{1}{\betac}, z(y)\right),
	\end{equation}
	\begin{equation}\label{eq:cev_PDF}
			\mathrm{Prob}(F_T=0) = \mathrm{Prob}(z_T=0) = 1 - P_\NCX\left(z_0;\, \frac1{\betac},0\right) = 
		1 - P_{\GAM}\left(\frac{z_0}{2};\,\frac{1}{2\betac} \right).
	\end{equation}
\end{subequations}
The last equality in the second line comes from Remark~\ref{r:ncx2_gamma}. 
%\begin{proposition} \label{p:cev_ncx2}
%Define $z_t$ be a transformation of the CEV process, $F_t$, in Eq.~\eqref{eq:cev-sde} by the function $z(\cdot)$:
%\begin{equation}\label{eq:z_0}
%	z_t := z(F_t) \quad (0\le t\le T)
%	\qtext{and} z(y) := \frac{y^{2\betac}}{\betac^2\sigma_0^2 T}.
%	%\qtext{and} z_K = \frac{K^{2\betac}}{\betac^2\sigma_0^2 T},
%\end{equation}
%The PDF of $z_T > 0$ is expressed in terms of the NCX2 distribution:
%$$ 	f_{z_T}(z;z_0)\, \mathrm{d}z = \mathrm{Prob}(z_T \in [z, z+\mathrm{d}z]) = f_\NCX\left(z_0; \frac{1}{\betac}+2 , z\right) \mathrm{d}z\quad (z>0).$$
%Since the location variable $z$ plays the role of the noncentrality parameter, the distribution of $z_T$ is not a NCX2 distribution. 
%The complementary CDF of $F_T>0$ and the probability of absorption (i.e., mass at zero) are respectively given by
%\begin{align} \label{eq:cev_CDF}
%	\mathrm{Prob}(F_T > y) &= \mathrm{Prob}(z_T > z(y)) = P_\NCX\left(z_0; \frac{1}{\betac}, z(y)\right),\\
%	\mathrm{Prob}(F_T=0) &= \mathrm{Prob}(z_T=0) = 1 - P_\NCX\left(z_0;\, \frac1{\betac},0\right) = 
%1 - P_{\GAM}\left(\frac{z_0}{2};\,\frac{1}{2\betac} \right).
%\end{align}
%The last equality in the second line comes from Remark~\ref{r:ncx2_gamma}. 
%\end{proposition}

\begin{proposition}[Mixture gamma representation of the CEV distribution] \label{pr:proposition1}
%\begin{proposition}\text{\rm Mixture gamma representation of the CEV distribution \mbox{\citep[\S~3.3]{makarov2010exact}}.}\\ \label{pr:proposition1}
	Under the CEV model with absorbing boundary condition, given $z_T>0$ ($F_t$ is not absorbed at the origin until time $T$), the transition from $z_0$ to $z_T$ can be sampled by an SP-mixture gamma distribution \citep[\S~3.3]{makarov2010exact}:
	\begin{equation}\label{eq:SP_mixture}
		 z_T \sim 2\, \GAM(N+1) \qtext{where} N\sim \SP\left(\frac{z_0}{2}, \frac{1}{2\betac}\right).
	\end{equation}
%	$$
%	$$
\end{proposition}
\begin{proof}
We employ the infinite series expansion of $I_\alpha(\sqrt{z_0\,z_T})$ to obtain 
\begin{align*}
	f_\NCX\left(z_0; \frac{1}{\betac}+2 , z_T\right) &= \frac12\left(\frac{z_0}{z_T}\right)^{\alpha/2}
	I_\alpha(\sqrt{z_0\,z_T})\; e^{-(z_T+z_0)/2} \quad \left(\alpha = \frac{1}{2\betac}\right)\\
	&= \frac12\left(\frac{z_0}{z_T}\right)^{\alpha/2} \sum_{k=0}^{\infty} \frac{(\sqrt{z_0\,z_T}/2)^{\alpha + 2k}}{k! \,\Gamma(k+\alpha+1)} \; e^{-(z_T+z_0)/2}\\
	&= \frac12 \sum_{k=0}^{\infty} \frac{(z_0/2)^{k+\alpha}e^{-z_0/2}}{\Gamma(k+\alpha+1)} \frac{(z_T/2)^{k}}{k!} \; e^{-z_T/2}\\
	&= P_{\GAM}\left(\frac{z_0}{2};\,\alpha \right) \sum_{k=0}^{\infty} P_\SP\left(k;\frac{z_0}{2},\alpha\right)\cdot \frac{1}{2}f_{\GAM}\left(\frac{z_T}{2};k+1\right),
\end{align*}
so that the transition density of $z_T$ conditional on $z_T>0$ is given by
$$
\frac{f_{z_T}(z_T;z_0)}{\mathrm{Prob}(z_T > 0)} = \sum_{k=0}^{\infty} P_\SP\left(k;\frac{z_0}{2},\alpha\right)\cdot \frac{1}{2}f_{\GAM}\left(\frac{z_T}{2};k+1\right).
$$
The right hand side is the composite probability density of the gamma variable, $2\,\GAM(k+1)$, where $k$ is sampled from the SP variable, $\SP\left(z_0/2, 1/2\betac\right)$. 
Therefore, $z_T>0$ can be sampled as $2\,\GAM(N+1)$.
\end{proof}
As proposed by \cite{kang2014simulation}, the SP variable, $N\sim \SP(\lambda, \alpha)$, can be sampled by a gamma-mixture Poisson variable:
\begin{equation}\label{eq:gamma_mixture_poisson}
	N \sim \POIS(\lambda-X)	\qtext{where} X\sim\GAM(\alpha) \qtext{conditional on} X<\lambda. 
\end{equation}
%\begin{proposition}[\mbox{Sampling an SP variable~\citep[Algorithm~3]{kang2014simulation}}] \label{p:kang}
%%\begin{proposition}\text{\rm \mbox{Sampling an SP variable~\citep[Algorithm~3]{kang2014simulation}}.} \label{p:kang}\\
%	The SP variable, $N\sim \SP(\lambda, \alpha)$, can be sampled by a gamma-mixture Poisson variable:
%$$ N \sim \POIS(\lambda-X)	\qtext{where} X\sim\GAM(\alpha) \qtext{conditional on} X<\lambda. $$
%\end{proposition}
%\begin{proof}
%	For proof, see \citet{kang2014simulation}.
%\end{proof}
Combining Eqs.~\eqref{eq:SP_mixture} and \eqref{eq:gamma_mixture_poisson} leads to an exact simulation algorithm for sampling the CEV distribution. Moreover, Scheme~\eqref{eq:gamma_mixture_poisson} can be further optimized in the context of the CEV simulation. This is the reason we select `Algorithm 3' among the three algorithms in \citet{kang2014simulation}. As this advantage is not mentioned explicitly in \citet{kang2014simulation}, we state this feature as a remark:
\begin{remark}[\citet{kang2014simulation}'s `Algorithm 3' for the CEV distribution]
	Suppose simulating $N\sim \SP(\lambda,\alpha)$ is the final goal, $X < \lambda$ can be obtained by repeatedly sampling $X$ until $X < \lambda$. If $N$ is an intermediate variable for the CEV simulation in Scheme~\eqref{eq:SP_mixture}, however, the repeated sampling of $X$ is unnecessary since
	$$ \mathrm{Prob}(X \ge \lambda) = 1 - P_{\GAM}\left(\frac{z_0}{2};\,\frac{1}{2\betac} \right) = \mathrm{Prob}(F_T=0).
	$$
Therefore, we set $F_T=0$ when $X\ge \lambda$.
\end{remark}

In summary, an exact simulation of the CEV distribution consists of successive generation of elementary random variables, for which efficient numerical algorithms are available in standard numerical library. Finally, the simulation algorithm for $F_T\sim \cev(F_0,\sigma_0^2T)$ is summarized in Algorithm~\ref{alg:alg1}:

%original Choi version of algorithm
\vspace{1em}
\begin{algorithm}[H] 
	\setlength{\abovedisplayskip}{10pt}
	\setstretch{1}
	\caption{Exact simulation of the CEV distribution of $F_T$ as depicted in Eq.~\eqref{eq:def_CEV}} \label{alg:alg1}
	\KwIn{$F_0$, $\sigma_0$, $\beta$, $T$}
	\KwOut{$F_T$}
	$\alpha\gets \frac1{2\betac}$ and $z_0 \gets \frac{F_0^{2\betac}}{\betac^2\sigma_0^2 T}$\;
	$X \gets \GAM\left(\alpha\right)$\;
	\eIf{$X\ge z_0/2$} 
	{
		\KwRet{$F_T \gets 0$}\;
	}{
		$z_T \gets 2\,\GAM(\POIS(z_0/2-X) + 1)$\;
		\KwRet{$F_T \gets \left(\betac^2\sigma_0^2Tz_T \right)^{1/2\betac}$}\;
	}
\end{algorithm}
\vspace{1em}

%%my version:less lines
%\vspace{1em}
%\begin{algorithm}[H] 
%	%\setlength{\abovedisplayskip}{10pt}
%	%\setstretch{1}
%	\caption{Exact simulation of the CEV distribution $F_T$ in Eq.~\eqref{eq:cev-sde}} \label{alg:alg1}
%	\KwIn{$F_0$, $\sigma_0$, $\beta$, $T$}
%	\KwOut{$F_T$}
%	$\alpha\gets \frac1{2\betac}$, $z_0 \gets \frac{F_0^{2\betac}}{\betac^2\sigma_0^2 T}$ and
%	$X \gets \GAM\left(\alpha\right)$\;
%	\eIf{$X\ge z_0/2$} 
%	{
%		\KwRet{$F_T \gets 0$}\;
%	}{
%		$z_T \gets 2\,\GAM(\POIS(z_0/2-X) + 1)$\;
%		\KwRet{$F_T \gets \left(\betac^2\sigma_0^2Tz_T \right)^{1/2\betac}$}\;
%	}
%\end{algorithm}
%\vspace{1em}

The CEV sampling algorithm can be applied to the conditional distribution of $F_{t+\dt}$ in Eqs.~\eqref{eq:beta01} and \eqref{eq:Ft_cond_new}. We need to sample $z_{t+\dt} = z(F_{t+\dt})$ from $z_t = z(\condFstep)$, where the transformation $z(\cdot)$ is modified to become
$$ z(y) = \frac{y^{2\betac}}{\betac^2 \rhoc^2 \sigma_t^2 \dt \IVstep}.
$$
We summarize the simulation of $F_t$ in the context of the SABR model in Algorithm~\ref{alg:alg2}:

%original Choi version of algorithm
\vspace{1em}
\begin{algorithm}[H]
	\caption{Simulation of $F_T$ under the SABR model} \label{alg:alg2}
	\KwIn{$F_0$, $\sigma_0$, $\vov$, $\beta$, $\rho$, $T$, $\dt$}
	\KwOut{$F_T$}
	$\alpha\gets \frac1{2\betac}$, $t \gets 0$, $ F_t \gets F_0$, and $\sigma_t \gets \sigma_0$\;
	\Repeat{t = T}{
		$ \sigma_{t+\dt} \gets $ Eq.~\eqref{eq:sig_t} \;
		$ \IVstep \gets $ Algorithm~\ref{alg:alg1_IV} \;
		$ \condFstep \gets $ Eq.~\eqref{eq:Ft_cond_new} \;
		$z_t \gets \frac{\left(\condFstep\right)^{2\betac}}{\betac^2 \rho^2 \sigma_t^2 \dt \IVstep}$\;
		$X \gets \GAM\left(\alpha\right)$ \;
		\eIf{$X\ge z_t/2$} 
		{
			$F_{t+h} \gets 0$ \;
			\textbf{break} \;
		}{
			$z_{t+\dt} \gets 2\,\GAM(\POIS(z_t/2 - X) + 1)$\;
			$F_{t+\dt} \gets \left(\betac^2 \rho^2 \sigma_t^2 \dt \IVstep z_{t+\dt}\right)^{1/2\betac}$\;
		}
		$t \gets t+h$ \;
	}
	\KwRet{$F_T$} \;
\end{algorithm}
\vspace{1em}

%%my version:less lines
%\vspace{1em}
%\begin{algorithm}[H]
%	\caption{Simulation of $F_t$ under the SABR model} \label{alg:alg2}
%	\KwIn{$F_0$, $\sigma_0$, $\vov$, $\beta$, $\rho$, $T$, $\dt$}
%	\KwOut{$F_T$}
%	$\alpha\gets \frac1{2\betac}$, $t \gets 0$, $ F_t \gets F_0$, and $\sigma_t \gets \sigma_0$\;
%	\Repeat{t = T}{
%		$ \sigma_{t+\dt} \gets $ Eq.~\eqref{eq:sig_t}, $ \IVstep \gets $ Proposition~\ref{p:sln56} and $ \condFstep \gets $ Eq.~\eqref{eq:Ft_cond_new} \;
%		$z_t \gets \frac{\left(\condFstep\right)^{2\betac}}{\betac^2 \rho^2 \sigma_t^2 \dt \IVstep}$ and $X \gets \GAM\left(\alpha\right)$  \;
%		\eIf{$X\ge z_t/2$} 
%		{
%			$F_T \gets 0$ \;
%			\textbf{break} \;
%		}{
%			$z_{t+\dt} \gets 2\,\GAM(\POIS(z_t/2 - X) + 1)$ and  $F_{t+\dt} \gets \left(\betac^2 \rho^2 \sigma_t^2 \dt \IVstep z_{t+\dt}\right)^{1/2\betac}$\;
%		}
%		$t \gets t+h$ \;
%	}
%	\KwRet{$F_T$} \;
%\end{algorithm}
%\vspace{1em}

\subsection{Comparison to Islah's approximation used in other simulation schemes}\label{ss:compare_F_T}\noindent
While the earlier SABR simulation methods use various approaches for sampling $\IVstep$ (Simulation step~1), almost all of them~\citep{chen2012low, cai2017sabr,leitao2017one,leitao2017multi, grzelak2019stochastic,cui2021efficient,kyriakou2023unified} are based on Islah's approximation for sampling $F_{t+\dt}$ (Simulation step~2).  We compare our approach to \citet{islah2009sabr-lmm}'s in this section. We state Islah's approximation as follows. For $0<\beta<1$, conditional on $\sigma_{t+\dt}$ and $\IVstep$ (and the filtration up to time $t$), the distribution of $F_{t+\dt}$ is approximated by
\begin{equation} \label{eq:Ft_islah}
	\mathrm{Prob}(F_{t+\dt} \ge y) \approx P_\NCX\left(z'_t; \frac{1 - \betac\rho^2}{\betac\rhoc^2}, z(y)\right),
\end{equation}
where
$$	
z'_t = \frac{1}{\betac^2\rhoc^2\sigma_t^2 \dt \IVstep}\left[F_t^{\betac} + \frac{\betac\rho}{\vov}(\sigma_{t+\dt} - \sigma_t)\right]^2 \qtext{and} z(y) = \frac{y^{2\betac}}{\betac^2\rhoc^2\sigma_t^2 \dt \IVstep}.
$$
For the rationales behind this approximation, see \citet[\S~2.2]{islah2009sabr-lmm}, \citet[Result~2.4]{chen2012low}, or \citet[Appendix~B.1]{cui2021efficient}.
%\begin{proposition}[\citet{islah2009sabr-lmm}'s approximation] \label{p:islah} 
%For $0<\beta<1$, conditional on $\sigma_{t+\dt}$ and $\IVstep$ (and the filtration up to time $t$), the distribution of $F_{t+\dt}$ is approximated by
%\begin{equation} \label{eq:Ft_islah}
%	\mathrm{Prob}(F_{t+\dt} \ge y) \approx P_\NCX\left(z'_t; \frac{1 - \betac\rho^2}{\betac\rhoc^2}, z(y)\right),
%\end{equation}
%where
%$$	
%z'_t = \frac{1}{\betac^2\rhoc^2\sigma_t^2 \dt \IVstep}\left[F_t^{\betac} + \frac{\betac\rho}{\vov}(\sigma_{t+\dt} - \sigma_t)\right]^2 \qtext{and} z(y) = \frac{y^{2\betac}}{\betac^2\rhoc^2\sigma_t^2 \dt \IVstep}.
%$$
%\end{proposition}
%\begin{proof}
%	For the ideas behind this approximation, see \citet[\S~2.2]{islah2009sabr-lmm}, \citet[Result~2.4]{chen2012low}, or \citet[Appendix~B.1]{cui2021efficient}.
%\end{proof}

Islah's approximation is compared with the probability under our CEV approximation in Eqs.~\eqref{eq:beta01} and \eqref{eq:Ft_cond_new}:
$$ \mathrm{Prob}(F_{t+\dt} \ge y) \approx P_\NCX\left(z_t; \frac{1}{\betac}, z(y)\right)
\qtext{with} z_t = z\left(\condFstep\right).
$$
The key difference between these two approximations is that Islah's approximation uses the modified degree of freedom, $(1 - \betac\rho^2)/(\betac\rhoc^2)$, in Eq.~\eqref{eq:Ft_islah}, and it is not equal to $1/\betac$ (except $\rho=0$). However, the transformation $z(\cdot)$ uses $\betac$ inconsistently. On the other hand, our CEV approximation uses $1/\betac$, which is the same degree of freedom for the CEV distribution. This means that $F_{t+\dt}$ under Islah's approximation is not a CEV distribution, which leads to two drawbacks discussed in the next two remarks. 

\begin{remark}
	As $\vovn\downarrow 0$, $F_{t+h}$ under Islah's approximation does not converge to the CEV distribution with $\beta$ if $\rho\neq 0$. This violates the limiting property that the SABR model should converge to the CEV model with $\beta$ as $\vov\downarrow 0$.
\end{remark}

\begin{remark}
	The martingale property of $F_t$ is not guaranteed under Islah's approximation: $F_t \neq E(F_{t+h}).$ In particular, when $\beta\downarrow 0$ ($\betac\uparrow 1$), we can show that $F_t$ is a supermartingale: $F_t \le E(F_{t+\dt})$. In the limit, $(1 - \betac\rho^2)/(\betac\rhoc^2)$, as well as $1/\betac$, approaches to 1, and the conditional expectation is given by
	$$ E\left(F_{t+\dt} \,|\, \sigma_{t+\dt}, \IVstep\right) = z^{-1}\left(z_t'\right) = \left|F_t + \frac{\rho}{\vov}(\sigma_{t+\dt} - \sigma_t)\right|.
	$$
	Taking expectation on both sides over the joint distribution of $(\sigma_{t+\dt}, \IVstep)$, we arrive at
	\begin{gather*}
	F_t = E\left(F_t + \frac{\rho}{\vov}(\sigma_{t+\dt} - \sigma_t)\right)
	\le E\left(\left|F_t + \frac{\rho}{\vov}(\sigma_{t+\dt} - \sigma_t)\right|\right)
	= E\left(E\left(F_{t+\dt} \,|\, \sigma_{t+\dt}, \IVstep\right)\right) = E(F_{t+\dt}).
\end{gather*}
\end{remark}

In order to preserve the martingale property, \citet{leitao2017one,leitao2017multi} artificially add an additive adjustment, commonly called the martingale correction, to the simulated samples of $F_{t+\dt}$. This ad-hoc martingale correction is inferior when compared with our approach where the martingale condition is naturally observed.

Apart from the failure of the martingale property in Islah's approximation, a fast and exact algorithm for sampling $F_{t+\dt}$ according to Eq.~\eqref{eq:Ft_islah} has not been available in the literature. 
The methods presented so far are either slow (but accurate) or inaccurate (but fast).
\citet{chen2012low} (partially) and \citet{cai2017sabr} adopted numerical root finding of $P_\NCX(\,\cdot\,, \,\cdot\,, z_K) = U$ for a uniform random number $U$, which belongs to the former type. This approach is slow due to the cumbersome evaluations of $P_\NCX$ despite using some good initial guess and efficient calculation algorithm. \citet{grzelak2019stochastic} speeded up sampling by using the stochastic collocation method, which is an optimal caching of the distribution function.
\citet{chen2012low} (partially) and \citet{cui2021efficient} adopted the moment-matched quadratic Gaussian approximation to Eq.~\eqref{eq:Ft_islah}, which is proven to be effective in the Heston simulation~\citep{andersen2008simple}. While being fast, it becomes inaccurate for a larger time step as it only uses the first two moments. This approach belongs to the latter type. 

It is interesting to note that the exact CEV sampling algorithm in Section~\ref{ss:cev_sample} can also be modified to sample $F_{t+\dt}$ under Islah's approximation, though it is not recommended due to the inherent drawbacks of Islah's approximation. We outline the simulation algorithm in \ref{apdx:islah_cev}. With the use of the exact CEV sampling procedure, we numerically illustrate the supermartingale property of $F_t$ in Section~\ref{s:num}.

\section{Numerical Results} \label{s:num}\noindent
We demonstrate numerical accuracy and effectiveness of our simulation algorithm with an extensive set of numerical examples. In Subsection~\ref{ss:error_separate}, we examine separately the errors caused by the two steps of simulation approximation: (i) SLN approximation of the average variance, (ii) CEV approximation of the forward price distribution. We then compare accuracy and speed tradeoff with analytic approximation formula pf European option prices in Subsection~\ref{ss:accuracy} and other simulation schemes in Subsection~\ref{ss:error}. In the last subsection, we show the competitive edge of accuracy performance of our CEV approximation of the forward price distribution with that of the Islah's approximation.

For easier comparison to other existing methods, we select the sets from the previous literature: the first two from \citet{antonov2012adv} and the rest from \citet{cai2017sabr}. Following the literature~\citep{chen2012low,cai2017sabr}, we price European options. Although the versatility of our simulation scheme goes beyond prcing European options, it offers a good testing case. Pricing European options also benefits from the availability of highly accurate benchmark option prices obtained from using the finite difference method (FDM) \citep{cai2017sabr}. 

%\begin{table}[!h]
%	\centering
%	\begin{tabular}{c|ccccccc}
%		\hline
%		{Cases} & $F_0$ & $\sigma_0$ & $\vov$ & $\rho$ & $\beta$ & $T$ & $K$\\
%		\hline
%		{Case \uppercase\expandafter{\romannumeral1}} & 1 & 0.25 & 0.3 & $-0.8$ & 0.3 & 10 & [0.2,\;2.0]\\
%		%{Case \uppercase\expandafter{\romannumeral2}} & 1 & 0.25 & 0.3 & -0.5 & 0.6 & 10 & [0.2,2]\\
%		{Case \uppercase\expandafter{\romannumeral2}} & 1 & 0.25 & 0.3 & $-0.5$ & 0.6 & 10 & [0.2,\;2.0]\\
%		%Case \uppercase\expandafter{\romannumeral4} & 50 bps & $0.6F_0^{1-\beta}$ & 0.3 & -0.3 & 0.25 & 3 & $[0.1,1.9]*F_0$\\
%		{Case \uppercase\expandafter{\romannumeral3}} & 0.05 & 0.4 & 0.6 & 0.0 & 0.3 & 1 & [0.02,\;0.10]\\
%		{Case \uppercase\expandafter{\romannumeral4}} & 1.1 & 0.4 & 0.8 & $-0.3$ & 0.3 & 4 & 1.1\\ 
%		{Case \uppercase\expandafter{\romannumeral5}} & 1.1 & 0.3 & 0.5 & $-0.8$ & 0.4 & [1,\;10] & 1.1\\ \hline
%	\end{tabular}
%	\caption{Parameter sets used in our numerical experiments.}
%	\label{t:param}
%\end{table}
In our numerical tests, unless otherwise specified, we used $N=10^5$ paths for each simulation, and repeated $m=50$ times. Out of $m$ prices, we computed the average bias (used the FDM price as the benchmark) and standard deviation. Our algorithm is implemented in MATLAB R2023b on a laptop with a 13th Gen Intel Core i7-1360P 2.20 GHz processor. In Section~\ref{ss:error}, to demonstrate the speed of algorithm, we compare the CPU time of our method with those reported in \citet{cai2017sabr}. \citet{cai2017sabr} used Matlab 7 on an Intel Core2 Q9400 2.66GHZ processor. We obtained the computer codes of \cite{cai2017sabr} and reran the codes again using our computer. The CPU times of their algorithm have been updated in Table~\ref{t:case3} and Figure~\ref{f:RMS}, where we examine the errors versus CPU time tradeoff. As shown in the comparison, our algorithm is seen to remain faster than the existing algorithms by at least one order of magnitude.

\subsection{Examination of numerical errors caused by separate simulation steps}\label{ss:error_separate}\noindent
Recall that there are two separate simulation steps that may contribute to the overall errors in the full simulation scheme. The first step involves the SLN approximation of the average variance. The second step involves the CEV approximation of the forward price. The second step gives rise to errors from the operator splitting procedure and frozen coefficient approximation. The simulation of the CEV distribution via the SP mixture Gamma variate is exact, so this does not lead to simulation errors. We manage to isolate the errors that are separated into contributions from the SLN approximation and frozen coefficient approximation by considering accuracy performance of several degenerate cases of the SABR model. The accuracy analysis of the overall errors in the full simulation scheme in the general SABR model is discussed in Subsections~\ref{ss:accuracy} and \ref{ss:error}.

First, we consider the lognormal SABR model, where $\beta=1$. The corresponding conditional forward price $F_T$ follows the lognormal distribution (see Eq.~\eqref{eq:beta1}), so one does not need to introduce operator splitting and frozen coefficient procedures to obtain an approximation of conditional $F_T$. The lognormal SABR model serves to isolate the error contributed by the SLN approximation only. We performed accuracy analysis of European option prices computed using our simulation scheme with the SLN approximation and the benchmark numerical results obtained from the finite difference method (FDM). We compute the relative error using the formula:
$$\text{relative error}=\frac{\text{simulation calculation}-\text{FDM}}{\text{FDM}}.$$
In our calculations, we use the following parameter values: $\beta=1$, $T=1$, $\vov=0.2$, $\rho=-0.75$ and $h=1$ for the base case, and perturb the parameter values to $\rho=-0.5$, $\rho=-0.25$, $\vov=0.4$, $\vov=0.8$, $h=0.5$ and $h=0.25$ as well. The other parameter values include $\sigma_0=0.2$ and $F_0=K=1$. The relative errors of the simulation calculations are listed in Table~\ref{t:SLN_error}.

%\begin{table}[!h]
%	\centering
%%	\small
%	\begin{tabular}{|c|c|c|c|c|c|c|}
%		\hline
%		$\rho=-0.75$ & $\rho=-0.75$ & $\rho=-0.75$ & $\rho=-0.5$ & $\rho=-0.25$ & $\rho=-0.75$ & $\rho=-0.75$ \\
%		$\vov=0.2$ & $\vov=0.2$ & $\vov=0.2$ & $\vov=0.2$ & $\vov=0.2$ & $\vov=0.4$ & $\vov=0.6$ \\
%		$h=1$ & $h=0.5$ & $h=0.25$ & $h=1$ & $h=1$ & $h=1$ & $h=1$ \\
%		\hline
%		0.00353\% & 0.00489\% & 0.0110\% & 0.00700\% & 0.00275\% & 0.00808\% & 0.0198\%\\
%		\hline
%	\end{tabular}
%	\caption{The relative percentage errors contributed by the SLN approximation in the calculations of European call option prices for the lognormal SABR model ($\beta=1$). The FDM value for $\rho=-0.75$, $\vov=0.2$ is 0.07910, $\rho=-0.5$, $\vov=0.2$ is 0.07942, $\rho=-0.25$, $\vov=0.2$ is 0.07969, $\rho=-0.75$, $\vov=0.4$ is 0.07860 and $\rho=-0.75$, $\vov=0.6$ is 0.07811.}
%	\label{t:SLN_error}
%\end{table}
\begin{table}[!h]
	\centering
	%	\small
	\begin{tabular}{|c|c|c|c|c|c|c|c|}
		\hline
		$\rho$ & -0.75 & -0.75 & -0.75 & -0.5 & -0.25 & -0.75 & -0.75 \\ \hline
		$\vov$ & 0.2 & 0.2 & 0.2 & 0.2 & 0.2 & 0.4 & 0.6 \\ \hline
		$h$ & 1 & 0.5 & 0.25 & 1 & 1 & 1 & 1 \\
		\hline
		$\text{Rel. Err.}$ & 0.00353\% & 0.00489\% & 0.0110\% & 0.00700\% & 0.00275\% & 0.00808\% & 0.0198\%\\
		\hline
	\end{tabular}
	\caption{The relative errors contributed by the SLN approximation in the calculations of European call option prices for the lognormal SABR model ($\beta=1$). The FDM value for $\rho=-0.75$, $\vov=0.2$ is 0.07910, $\rho=-0.5$, $\vov=0.2$ is 0.07942, $\rho=-0.25$, $\vov=0.2$ is 0.07969, $\rho=-0.75$, $\vov=0.4$ is 0.07860 and $\rho=-0.75$, $\vov=0.6$ is 0.07811.}
	\label{t:SLN_error}
\end{table}

We observe that the errors contributed by the SLN approximation for the calculations of European option prices for the lognormal SABR model ($\beta=1$) are exceedingly small, only a very small fraction of one percentage point. In most cases, the agreement of the simulation result and FDM value is up to 3 significant figures. The errors increase with higher value of $\vov$, but they remain to be very small. The decrease of time step $h$ is irrelevant in decreasing the errors. Indeed, errors increase due to roundoffs with more calculation due to an increase of the number of time steps. In conclusion, the SLN approximation of the average variance is highly accurate and contributes almost insignificant errors.

Next, we consider the SABR models with $\rho=1$, $\rho=0.75$ and $\rho=0$ while $0<\beta<1$. The full correlation case ($\rho=1$) exhibits errors in calculating the European option prices contributed by the SLN approximation (should be exceedingly small from the above discussion) and frozen coefficient approximation, but no errors contributed from the operator splitting procedures. The conditional forward price of the zero correlation ($\rho=0$) SABR model follows the CEV distribution exactly, so there are no errors contributed by the operator splitting and frozen coefficient procedures. The errors arise only from the SLN approximation and simulation of the CEV distribution via the SP mixture Gamma variate. For $\rho=0.75$, it should have all three error sources: SLN approximation, operator splitting and frozen coefficient approximation. Table~\ref{t:rho_changes} lists the relative errors of the simulation calculations, which are benchmarked with the FDM results. Parameter values for $\sigma_0$, $F_0$, $K$ and $T$ are the same as those used in Table~\ref{t:SLN_error}.
%\begin{table}[!h]
%	\centering
%	\begin{tabular}{|c|c|c|}
%		\hline
%		\multirow{2}{*}{Category} & \multicolumn{2}{c|}{Values} \\ \cline{2-3}
%		& Value 1 & Value 2 \\ \hline
%		\multirow{3}{*}{A}         & 1.1     & 2.1     \\ \cline{2-3}
%		& 1.2     & 2.2     \\ \cline{2-3}
%		& 1.3     & 2.3     \\ \hline
%		\multirow{2}{*}{B}         & 3.1     & 4.1     \\ \cline{2-3}
%		& 3.2     & 4.2     \\ \hline
%	\end{tabular}
%	\caption{Sample table using multirow}
%	\label{tab:multirow_example}
%\end{table}
\begin{table}[!h]
	\centering
	\begin{tabular}{|c|c|c|c|c|c|}
		\hline
		\multirow{2}{*}{$\rho=1$}& $\vov=0.2$ & $\vov=0.2$ & $\vov=0.2$ & $ \vov=0.4$ & $\vov=0.8$\\
	 &$\beta=0.4$ & $\beta=0.6$ & $\beta=0.8$ & $\beta=0.8$ & $\beta=0.8$ \\
		\hline
		$h=1$ & 0.518\%  & 0.348\% &0.164\% & 0.404 & 0.746\%  \\
		$h=0.5$& 0.418\% & 0.225\% & 0.0683\% & 0.199 & 0.350\% \\
		$h=0.25$ & 0.244\% & 0.119\% & 0.0299\% & 0.0947 & 0.224\%\\
		\hline\hline
		\multirow{2}{*}{$\rho=0.75$}& $\vov=0.2$ & $\vov=0.2$ & $\vov=0.2$ & $ \vov=0.4$ & $\vov=0.8$\\
		&$\beta=0.4$ & $\beta=0.6$ & $\beta=0.8$ & $\beta=0.8$ & $\beta=0.8$ \\
		\hline
		$h=1$ & 0.415\% & 0.306\% & 0.125\% & 0.333\% & 0.421\%  \\
		$h=0.5$& 0.237\% & 0.130\% & 0.0647\% & 0.0773\% & 0.301\% \\
		$h=0.25$ & 0.0399\% & 0.0287\% & 0.0242\% & 0.0597\% & 0.215\%\\
		\hline\hline
		\multirow{2}{*}{$\rho=0$}& $\vov=0.2$ & $\vov=0.2$ & $\vov=0.2$ & $ \vov=0.4$ & $\vov=0.8$\\
		&$\beta=0.4$ & $\beta=0.6$ & $\beta=0.8$ & $\beta=0.8$ & $\beta=0.8$ \\
		\hline
		$h=1$ & -0.0562\% & -0.00574\% & 0.0704\% & 0.0257\% & 0.123\%  \\
		$h=0.5$& -0.0138\% & 0.0151\% & -0.0537\% & 0.0803\% & 0.0292\% \\
		$h=0.25$ & -0.0526\% & 0.0123\% & -0.0162\% & 0.0616\% & 0.0758\%\\
		\hline
	\end{tabular}
	\caption{The relative errors in calculating the European call option prices for the full correlation ($\rho=1$), $\rho=0.75$ and zero correlation ($\rho=0$) SABR models. The FDM value for $\rho=1$: $\vov=0.2\,,\beta=0.4$ is 0.07989, $\vov=0.2,\,\beta=0.6$ is 0.08002, $\vov=0.2,\,\beta=0.8$ is 0.08017, $\vov=0.4,\,\beta=0.8$ is 0.08044 and $\vov=0.8,\,\beta=0.8$ is 0.08043. The FDM values for $\rho=0.75$ are 0.07998, 0.08008, 0.08018, 0.08083 and 0.08276, respectively. The FDM values for $\rho=0$ are 0.07996, 0.07994, 0.07992, 0.08068 and 0.08355, respectively.}
	\label{t:rho_changes}
\end{table}

For the full correlation ($\rho=1$) SABR model, the main source of errors comes from the frozen coefficient approximation. In all cases of different values of $\vov$ and $\beta$, the errors show decrease in magnitude with respect to the time step $h$. The frozen coefficient approximation is sensitive to the size of the time step $h$, smaller approximation error with smaller time step. Also, the errors show increase in magnitude with respect to $\betac=1-\beta$ and $\vov$. Such trends hold similarly for $\rho=0.75$, whose errors are from the three main sources: SLN approximation, operator splitting and frozen coefficient approximation. It is seen that the errors from operator splitting and frozen coefficient approximation may counteract with each other, leading to a smaller error when compared with the full correlation ($\rho=1$). On the other hand, for zero correlation ($\rho=0$) SABR model, the sources of errors come from the SLN approximation and exact simulation of the CEV distribution, both are exceedingly small (well below one percentage point). There is no pattern of dependence on $\vov$, $\beta$ and $h$ since errors are mostly from roundoff.

\subsection{Comparison to analytic approximations} \label{ss:accuracy}\noindent
\begin{table}[!h]
	\centering
	\begin{tabular}{c|ccccccc}
		\hline
		{Cases} & $F_0$ & $\sigma_0$ & $\vov$ & $\rho$ & $\beta$ & $T$ & $K$\\
		\hline
		{Case \uppercase\expandafter{\romannumeral1}} & 1 & 0.25 & 0.3 & $-0.8$ & 0.3 & 10 & [0.2,\;2.0]\\
		%{Case \uppercase\expandafter{\romannumeral2}} & 1 & 0.25 & 0.3 & -0.5 & 0.6 & 10 & [0.2,2]\\
		{Case \uppercase\expandafter{\romannumeral2}} & 1 & 0.25 & 0.3 & $-0.5$ & 0.6 & 10 & [0.2,\;2.0]\\
		%Case \uppercase\expandafter{\romannumeral4} & 50 bps & $0.6F_0^{1-\beta}$ & 0.3 & -0.3 & 0.25 & 3 & $[0.1,1.9]*F_0$\\
		{Case \uppercase\expandafter{\romannumeral3}} & 0.05 & 0.4 & 0.6 & 0.0 & 0.3 & 1 & [0.02,\;0.10]\\
		{Case \uppercase\expandafter{\romannumeral4}} & 1.1 & 0.4 & 0.8 & $-0.3$ & 0.3 & 4 & 1.1\\ 
		{Case \uppercase\expandafter{\romannumeral5}} & 1.1 & 0.3 & 0.5 & $-0.8$ & 0.4 & [1,\;10] & 1.1\\ \hline
	\end{tabular}
	\caption{Parameter sets used in our numerical experiments.}
	\label{t:param}
\end{table}
Table~\ref{t:param} shows the five parameter sets used in our numerical tests. In this subsection, we test the accuracy of our algorithm (Algorithm~\ref{alg:alg2}) using Cases~\uppercase\expandafter{\romannumeral1} and \uppercase\expandafter{\romannumeral2} in Table~\ref{t:param}. Tables~\ref{t:case1} and \ref{t:case2} show the numerical results for the two sets.
The two examples are previously tested in \citet{antonov2012adv} to demonstrate the performance of the cutting-edge analytic approximations for European vanilla options, map to the zero correlation (ZC Map, and hybrid map to the zero correlation (Hyb ZC Map), against the widely used \citet{hagan2002sabr}'s implied volatility formula.\footnote{Since these approximation formulas are all expressed in terms of implied volatilities, we use the Black-Scholes formula to obtain the call option price. \citet{antonov2012adv} use MC method for the benchmark option price. Instead, we use FDM price for the benchmark, which is more precise.} Therefore, we also display those option prices for comparison. 

Tables~\ref{t:case1} and \ref{t:case2} demonstrate that our simulation scheme is highly accurate. As we decrease the time step $h$ from 1 to $1/16$, the option price converges to its true price. Even for a large time step ($h=1$), the option prices are accurate almost up to three decimal points. Such bias level is comparable to the advanced analytic methods. Overall speaking, we note that the bias in Table~\ref{t:case2} (Case~II) is lower than that in Table~\ref{t:case1} (Case~I). This is probably due to the observation that the geometric BM approximation for the conditional mean $\condF$ in Eq.~\eqref{eq:cev_gbm} holds better as $\beta=0.6$ (Case~II) is closer to $\beta=1$ (lognormal SABR model) than $\beta=0.3$ (Case~I).

\begin{table}[!h]
	\centering
	\begin{tabular}{|c||c|c|c|c|c|c|c|} \hline
$K/F_0$ & 0.2 & 0.4 & 0.8 & 1.0 & 1.2 & 1.6 & 2.0 \\ \hline\hline
\multicolumn{8}{|c|}{Exact option price from FDM} \\ \hline
FDM & 0.84255 & 0.68906 & 0.40646 & 0.28502 & 0.18304 & 0.05343 & 0.01096 \\ \hline\hline
\multicolumn{8}{|c|}{Bias of our simulation method\; $(\times 10^{-3})$} \\ \hline
$\dt=1$ & -1.22 & -1.49 & -0.37 & 0.49 & 1.28 & 1.72 & 1.32 \\
$\dt=1/4$ & -0.46 & -0.24 & 0.22 & 0.42 & 0.56 & 0.56 & 0.48 \\
$\dt=1/16$ & -0.34 & -0.20 & 0.00 & 0.05 & 0.11 & 0.10 & 0.10 \\ \hline
\multicolumn{8}{|c|}{Standard deviation of our simulation method\; $(\times 10^{-3})$} \\ \hline
$\dt=1$ & 1.97 & 1.83 & 1.50 & 1.31 & 1.08 & 0.63 & 0.38 \\
$\dt=1/4$ & 1.96 & 1.73 & 1.29 & 1.08 & 0.91 & 0.61 & 0.41 \\
$\dt=1/16$ & 1.89 & 1.75 & 1.44 & 1.28 & 1.06 & 0.53 & 0.22 \\ \hline\hline
\multicolumn{8}{|c|}{Bias of various analytic approximation methods\; $(\times 10^{-3})$} \\ \hline
Hagan & 22.35 & 23.64 & 17.94 & 13.81 & 9.38 & 2.54 & 0.82 \\
ZC Map & 0.37 & 0.51 & 1.39 & 2.29 & 3.20 & 4.02 & 2.66 \\
Hyb ZC Map & 4.64 & 5.81 & 4.07 & 2.29 & 0.43 & -1.26 & -0.30 \\ \hline
	\end{tabular}
	\caption{The European option prices from our simulation methods for Case \uppercase\expandafter{\romannumeral1}, compared with those from various analytic approximation methods (Hagan, ZC Map, and Hyb ZC Map) reported in \citet{antonov2012adv}. The errors are measured against the benchmark option prices from the finite difference method (FDM). We used $N=100,000$ paths in each simulation run and repeated $m=50$ times. The average computation time is 0.8, 1.6, and 14.5 seconds for $h=1$, $1/4$, and $1/16$, respectively.}
	\label{t:case1}
\end{table}

\begin{table}[!h]
	\centering
	\begin{tabular}{|c||c|c|c|c|c|c|c|} \hline
$K/F_0$ & 0.2 & 0.4 & 0.8 & 1.0 & 1.2 & 1.6 & 2 \\ \hline\hline
\multicolumn{8}{|c|}{Exact option price from FDM} \\ \hline
FDM & 0.82886 & 0.66959 & 0.39772 & 0.29118 & 0.20690 & 0.10018 & 0.05014 \\ \hline\hline
\multicolumn{8}{|c|}{Bias of our simulation method \; $(\times 10^{-3})$} \\ \hline
$\dt=1$ & -0.14 & -0.30 & -0.42 & -0.43 & -0.43 & -0.40 & -0.30 \\
$\dt=1/4$ & 0.45 & 0.37 & 0.27 & 0.20 & 0.10 & -0.02 & 0.00 \\
$\dt=1/16$ & 0.01 & -0.01 & 0.02 & 0.04 & 0.03 & 0.00 & -0.03 \\ \hline
\multicolumn{8}{|c|}{Stdev of our simulation method \; $(\times 10^{-3})$} \\ \hline
$\dt=1$ & 2.23 & 2.09 & 1.78 & 1.65 & 1.51 & 1.20 & 0.93 \\
$\dt=1/4$ & 2.21 & 2.10 & 1.85 & 1.70 & 1.51 & 1.14 & 0.88 \\
$\dt=1/16$ & 2.46 & 2.32 & 2.01 & 1.79 & 1.58 & 1.22 & 0.97 \\ \hline\hline
\multicolumn{8}{|c|}{Bias of various analytic approximation methods \; $(\times 10^{-3})$} \\ \hline
Hagan & 11.65 & 15.94 & 16.69 & 14.67 & 12.18 & 8.56 & 6.88 \\
ZC Map & -1.56 & -1.57 & 0.56 & 2.37 & 4.02 & 5.77 & 5.33 \\
Hyb ZC Map & 3.07 & 4.53 & 3.86 & 2.37 & 1.00 & -0.10 & 0.52 \\ \hline
	\end{tabular}
	\caption{The European option prices from our simulation methods for Case \uppercase\expandafter{\romannumeral2}, compared with those from various analytic approximation methods (Hagan, ZC Map, and Hyb ZC Map) reported in \citet{antonov2012adv}. The errors are measured against the benchmark option prices from the finite difference method (FDM). We used $N=100,000$ paths in each simulation run and repeated $m=50$ times.}
	\label{t:case2}
\end{table}

\subsection{Accuracy and speed trade-off} \label{ss:error}\noindent
In this subsection, we demonstrate the computational efficiency of our method by comparing the accuracy-speed trade-off with the earlier simulation methods, such as the Euler, low-bias~\citep{chen2012low}, and piecewise semi-exact (PSE)~\citep{cai2017sabr} schemes. We use Cases~III and IV to achieve better comparison since these cases have been tested by \citet{cai2017sabr}. We take advantage of the extensive records of the CPU time reported therein.

\begin{table}[!h]
	\centering		
	\begin{tabular}{|c||c|c|c|c|c|c||c|} \hline
		$K/F_0$ & 0.4 & 0.8 & 1 & 1.2 & 1.6 & 2 & Time (s) \\ \hline\hline
		\multicolumn{8}{|c|}{Exact option price from FDM} \\ \hline
		FDM & 0.04559 & 0.04141 & 0.03942 & 0.03750 & 0.03390 & 0.03061 &   \\ \hline\hline
		\multicolumn{8}{|c|}{Error $(\times 10^{-3})$ and CPU time of our simulation method} \\ \hline
		$\dt=1$ & 0.00 & 0.00 & 0.00 & 0.00 & -0.01 & -0.01 & 0.03 \\ \hline\hline
		\multicolumn{8}{|c|}{Bias $(\times 10^{-3})$ and CPU time of the Euler scheme reported in \cite{cai2017sabr}} \\ \hline
		$\dt=1/400$ & 1.6 & 1.5 & 1.5 & 1.4 & 1.3 & 1.2 & 4.12 \\ \hline
		$\dt=1/800$ & 0.7 & 0.6 & 0.5 & 0.5 & 0.4 & 0.3 & 8.26 \\ \hline
		$\dt=1/1600$ & -0.3 & -0.3 & -0.3 & -0.3 & -0.3 & -0.3 & 16.2 \\ \hline\hline
		\multicolumn{8}{|c|}{Bias $(\times 10^{-3})$ and CPU time of the low-bias scheme reported in \cite{cai2017sabr}} \\ \hline
		$\dt=1/4$ & 0.5 & 0.5 & 0.5 & 0.4 & 0.4 & 0.4 & 6.54 \\ \hline
		$\dt=1/8$ & 0.4 & 0.4 & 0.4 & 0.3 & 0.3 & 0.2 & 14.7 \\ \hline\hline
		\multicolumn{8}{|c|}{Bias $(\times 10^{-3})$ and CPU time of the PSE scheme reported in \cite{cai2017sabr}} \\ \hline
		$\dt=1$ & 0.1 & 0.2 & 0.2 & 0.2 & 0.2 & 0.2 & 8.19 \\ \hline
	\end{tabular}
	\caption{The European option prices from our simulation methods for Case \uppercase\expandafter{\romannumeral3}, compared with other MC methods: Euler, low-bias~\citep{chen2012low}, and PSE~\citep{cai2017sabr} schemes. We used the price and computation time reported in \citet[Table~7]{cai2017sabr}. The biases are measured against the benchmark option prices from the finite difference method (FDM). All simulation methods used $N=100,000$ paths in each simulation run and repeated $m=50$ times.}
	\label{t:case3}
\end{table}

The results for Case~III in Table~\ref{t:case3} show that our method is at least one hundred times faster than the competitors while the bias is much lower. Under $\rho=0$ in this case, both our CEV approximation and Islah's approximation become exact, and the source of bias is the accuracy of sampling $\IVstep$. The near-zero error of our method validates extremely high accuracy of our SLN approximation for $\IVstep$. For the Euler and low-bias schemes, it is necessary to decrease the time step $h$ for more accurate sampling of $\IVstep$, resulting with slower computation. Although the PSE scheme exhibits good accuracy for single step ($h=1$), it requires prohibitive computation for numerical inverse transform for sampling $\IVstep$. 

\begin{table}[!h]
	\centering
	\begin{tabular}{|r|c|c|r|} \hline
		%\multicolumn{4}{c}{Varying $\Delta T$ and $N$}\\ \hline
		$N$ \qquad &{$\dt$}&{RMS error ($\times 10^{-3}$)}&{Time (s)}\\ \hline
		160,000 & 1 & 3.27 & 0.53\\
		320,000 & 1/2 & 1.94 & 2.27\\
		640,000 & 1/4 & 1.21 & 9.66\\
		1,280,000 & 1/8 & 0.86 & 41.35\\
		2,560,000 & 1/16 & 0.59 & 279.53\\
		\hline 
	\end{tabular}
	%0.531305216 2.267720788 9.661642846 41.34599931 279.5265672
	\caption{The root mean square (RMS) error and CPU time from Case~IV when the number of paths $N$ is doubled and time step $\dt$ is halved.}
	\label{t:rms_cpu}
\end{table}

Next, we perform CPU time versus root-mean-square (RMS) error analysis on Case~IV. The RMS error is defined by
\begin{equation}\label{eq:rms}
	\text{RMS error}=\sqrt{\text{Stdev}^2+\text{Bias}^2}.
\end{equation}
Table~\ref{t:rms_cpu} presents the RMS error and CPU time for decreasing time step $\dt$ and increasing number of paths $N$. Similar to \citet[Table~9]{cai2017sabr}, we set $\dt \propto 1/N$. In each successive row, we double $N$ and halve $\dt$ to generate data points for the plots of RMS versus simulation time.

\begin{figure}[H]
	\centering
	\includegraphics[width=0.6\linewidth]{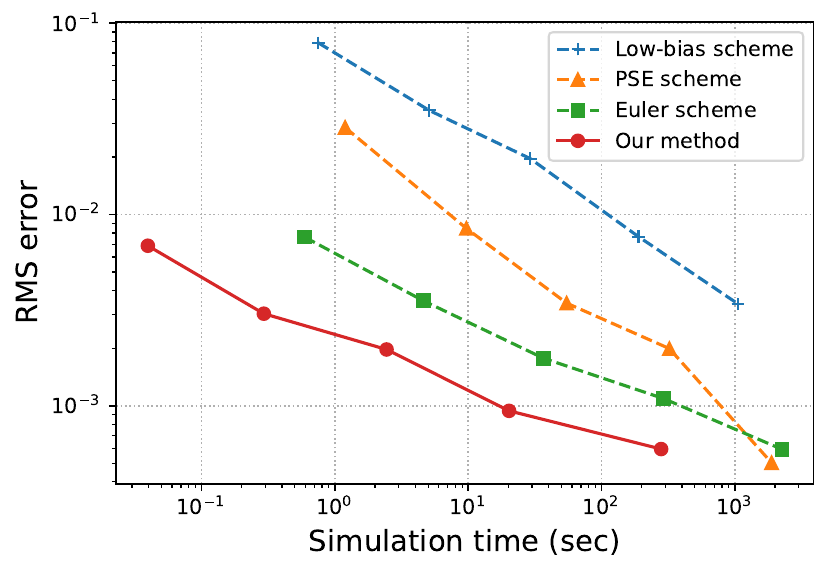}
	\caption{RMS error as a function of CPU time for simulation methods. The data points for our simulation scheme are from Table~\ref{t:rms_cpu} while those for other methods are from \citet[Table~9 and Figure~6]{cai2017sabr}.}
	\label{f:RMS}
\end{figure}

Based on the data points from Table~\ref{t:rms_cpu}, Figure~\ref{f:RMS} shows the plots of the trade-off between CPU time and RMS error. The figure explicitly demonstrates that, while the decay is slightly slower than the competing schemes, our simulation method achieves the same degree of RMS error using much less computation time, only about one hundredth of the Euler scheme. Thanks to the high efficiency for small time step $\dt$, our algorithm is also effective for evaluating path-dependent options with frequent discrete monitoring.

\subsection{Comparison with Islah's approximation} \label{ss:islah}\noindent
In this last numerical experiment using Case~V, we demonstrate superiority of our CEV approximation of conditional $F_T$ in Section~\ref{ss:our} to Islah's approximation which has been commonly adopted in other simulation schemes. The parameter set is also taken from \citet[Figure~4]{cai2017sabr}. We price the at-the-money European option for maturity, $T=1,2,\ldots, 10$ with time step $h=1/2$ or 1. On one hand, our simulation scheme consists of (i) SLN sampling of $\IVstep$ and (ii) CEV model sampling of conditional $F_T$. On the other hand, we replace the second step with Islah's approximation (Eq.~\eqref{eq:Ft_islah}).

Under Islah's approximation, a power transformation of $F_T$ observes a CEV distribution; see Eq.~\eqref{eq:CEV_islah}. Therefore, \citet{kang2014simulation}'s algorithm also provides a better way to simulate Islah's approach. Accordingly, we use Algorithm~\ref{alg:alg1} with $\beta$, $F_0$, and $\sigma_0^2 T$ replaced by $\beta'$, $\bar{F}_T'$, and $\rhoc^2\sigma_0^2 T \IV$ in Eq.~\eqref{eq:CEV_islah}, respectively, to sample $\left((\betac'/\betac) F_T^{\betac}\right)^{1/\betac'}$. Then, $F_T$ is finally obtained.  

\begin{figure}[h]
	\centering
	\subfigure[\small{Average terminal price $F_T$}]{
		\label{f:a}
		\includegraphics[width=0.47\linewidth]{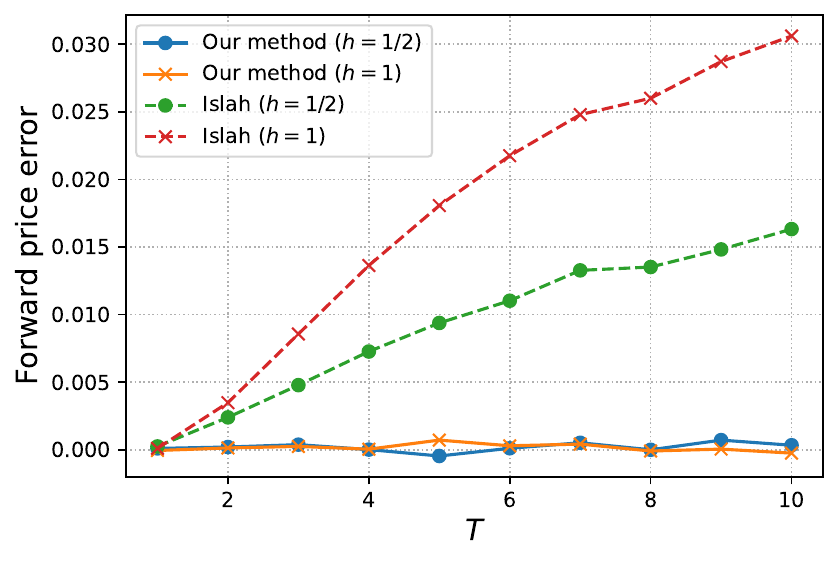} 
	}
	\subfigure[\small{European call option price}]{
		\label{f:b}
		\includegraphics[width=0.47\linewidth]{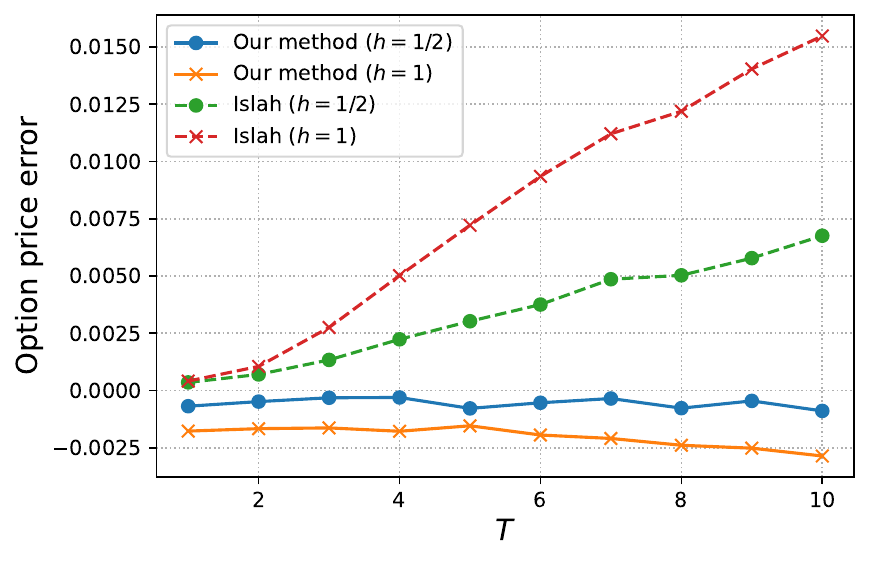} 
	}
	\caption{Comparison of Islah's approximation and our conditional distribution of $F_T$}
	\label{f:Islah_comparison}
\end{figure}

Figure~\ref{f:Islah_comparison} compares the results from the two approaches: our conditional distribution of $F_T$ and Islah's approximation. Figure~\ref{f:a} shows the error of the terminal price $E(F_T)$, which should be equal to $F_0=1.1$ in theory. As expected, the results show that our approach preserves the martingale property of the forward price fairly well for all time-to-maturity $T$. That is, $E(F_T)=F_0$ holds with very high accuracy regardless of $\dt$. In Islah's approach, however, $E(F_T)$ deviates from $F_0$. In particular, the deviation from $F_0$ accumulates as the time-to-maturity $T$ becomes longer. The time step $\dt$ has to be smaller to reduce the deviation. 

Figure~\ref{f:b} shows the error of the ATM option price from its true value computed with the FDM. Our approach is more accurate than Islah's approach. Note that the higher option price error in Islah's approach is due to failure of the martingale property. As the delta of ATM option is about 0.5, the option price error is approximately half of the forward price error. Conversely, the martingale preservation of our CEV approximation indeed improves European call option pricing. The option price error in our approach is mostly attributed to the error in $\IVstep$ distribution. The underpricing of option seems to be related to that the ex-kurtosis of $\IVstep$ sampled from the SLN approach, which is slightly smaller than that of true distribution (see Figure~\ref{f:moments}). The error is reduced when a smaller $h$ is used.

From Figure \ref{f:b}, we observe that under the same approach to sample $\IV$, and same size of time step $\dt$, the results using our CEV distribution of $F_T$ are very close to the benchmark FDM results. However, the results of Islah's approximation have larger numerical bias as $T$ gets longer. These results illustrate that our CEV approximation of the conditional $F_T$ is superior to Islah's approximation.
  
\section{Conclusion} \label{s:conc} \noindent
The stochastic-alpha-beta-rho (SABR) model proposed by \citet{hagan2002sabr} has been widely adopted for pricing derivatives under the stochastic volatility framework. The corresponding simulation methods have been studied extensively. The naive time-discretization schemes, such as the Euler or Milstein schemes, suffer from non-negligible bias caused by the truncation of negative forward values even when a small time step is used.
Although researchers proposed various simulation schemes that overcome the limitation of the time-discretization schemes, the schemes have their own limitations. In sampling conditional average variance, the simulation schemes are riddled with either heavy computation~\citep{cai2017sabr,leitao2017one} or inaccurate approximation over a large simulation step~\citep{chen2012low}. In sampling conditional forward price, almost all simulation algorithms adopt \citet{islah2009sabr-lmm}'s NCX2 approximation that neither preserves martingale property nor provides efficient sampling algorithm.

This paper present a series of innovations on the SABR simulation. Firstly, we derive the first four moments of the conditional average variance analytically, and sample the conditional average variance via an SLN random variable fitted to the first three moments. Our sampling procedure is accurate for reasonably large time-to-maturity and fast in computation in comparison to the numerical inverse transform methods. Secondly, we propose a martingale-preserving CEV approximation for the conditional forward price that preserves the martingale property. Lastly, we sample the conditional forward price efficiently with the exact CEV sampling algorithm of \citet{makarov2010exact} and \citet{kang2014simulation}. Numerical results show that our methods are highly efficient and accurate even under challenging parameter values, like large time-to-maturity $T$, when compared with other simulation methods.

\section*{Acknowledgment} \noindent
The works of Lilian Hu and Yue Kuen Kwok are supported by the Guangzhou-HKUST (GZ) Joint Funding Program (Number 2024A03J0630). The work of Yue Kuen Kwok is also supported by Project 2023CX10X1.

\appendix
\section{Conditional moments of $\IVstep$} \label{apdx:cond_mom} \noindent
The conditional average variance $\IVstep(\hat{Z})$ is closely related to the exponential functional of Brownian motion (BM), defined by
$$ A_t := \int_{0}^t e^{2 Z_s} \,\mathrm{d}s,
$$
where $Z_s$ is a standard BM. The derivation of Proposition~\ref{p:cond_mom} is based on the analytic conditional moments of the exponential functional of BM~\citep[(5.4)]{matsuyor2005exp1}:
$$ E\left(\left(A_t\right)^n | Z_t=x\right) = \frac{e^{nx}}{n! t} \int_x^\infty
b e^{(x^2-b^2)/2t} \left( \cosh b - \cosh x\right)^n \mathrm{d}b.
$$
The conditional average variance $\IVstep(\hat{Z})$ can be expressed by $A_t$ conditional on the terminal value of the BM, $Z_t$:
$$  \IVstep(\hat{Z}) = \frac{1}{\vovn^2} A_{\vovn^2} \,\Big|_{Z_{\vovn^2} = \vovn \hat{Z}}\, \quad \left(\vovn = \vov\sqrt{h}\right).
$$
Accordingly, the conditional moments of $\IVstep$ admit the following integral form:
\begin{align*}
\mu'_k &= E\left(\left(\IVstep\right)^k \right) = \frac{1}{\vovn^{2k}} E\left[\left(A_{\vovn^2}\right)^k \,\big|_{Z_{\vovn^2} = \vovn \hat{Z}}\right] \\	
	&= \frac{e^{k\vovn \hat{Z}}}{k!\,\vovn^{2k}} \int_{\hat{Z}}^\infty
	s e^{(\hat{Z}^2-s^2)/2} \left( \cosh\vovn s - \cosh\vovn \hat{Z} \right)^k \mathrm{d}s \\
	&= \frac{e^{k\vovn \hat{Z}}}{(k-1)!\,\vovn^{2k-1}} \int_{\hat{Z}}^\infty
	e^{(\hat{Z}^2-s^2)/2} \sinh\vovn s \left( \cosh\vovn s - \cosh\vovn \hat{Z}\right)^{k-1} \mathrm{d}s,
\end{align*}
where we apply the change of variable, $s = \vovn b$, to obtain the second line, and integration by part to obtain the third line.

Using the following analytic integral
$$ 
\int_z^\infty e^{-s^2/2} \sinh a s \, \mathrm{d}s = \frac{N(-z+a) - N(-z-a)}{2 n(a)} = \frac{N(z+a) - N(z-a)}{2 n(a)},
$$
for $k\neq 0$, we obtain
$$ \int_{\hat{Z}}^\infty e^{(\hat{Z}^2-s^2)/2} \sinh k\vovn s \, \mathrm{d}s = k\vovn\,m_k(\hat{Z})
\qtext{where}
m_k(\hat{Z}) := \frac{N(\hat{Z}+k\vovn)-N(\hat{Z}-k\vovn)}{2k\vovn\; n\left(\sqrt{\hat{Z}^2+(k\vovn)^2}\right)},
$$
which will be useful in the derivation below. %Note that the function is symmetric on 
Here, we define $m_k(\hat{Z})$ in such a way that $m_k(\hat{Z}) \to 1$ for any non-zero $k$ as $\vovn \to 0$.

Next, we evaluate the moments $E((\IVstep)^k)$, $k=1,2,3,4$. For $k=1$, we obtain
$$
\mu = \frac{e^{\vovn \hat{Z}}}{\vovn} \int_{\hat{Z}}^\infty
e^{(\hat{Z}^2-s^2)/2} \sinh\vovn s \, \mathrm{d}s 
= e^{\vovn \hat{Z}}\, m_1(\hat{Z}) = \left(\frac{\sigma_{t+\dt}}{\sigma_t}\right)\, m_1.
$$

For $k=2$, using $\sinh x\cosh x = \frac12 \sinh 2x$, we have
\begin{align*}
	\mu'_2 &= \frac{e^{2\vovn \hat{Z}}}{\vovn^3} \int_z^\infty
	e^{(\hat{Z}^2-s^2)/2} \sinh\vovn s \left( \cosh\vovn s - \cosh\vovn z \right) \, \mathrm{d}s \\
	&= \frac{e^{2\vovn \hat{Z}}}{\vovn^3} \int_z^\infty
	e^{(\hat{Z}^2-s^2)/2} \left( \frac{\sinh2\vovn s}{2} - \sinh\vovn s\cosh\vovn z \right) \, \mathrm{d}s \\
	& = \left(\frac{\sigma_{t+\dt}}{\sigma_t}\right)^2 \frac{1}{\vovn^2} \left(m_2 - c m_1 \right).
\end{align*}

For $k=3$, using $ \sinh x\cosh^2 x = \frac14 \left(\sinh 3x + \sinh x\right) $, we obtain
\begin{align*}
	\mu'_3 &= \frac{e^{3\vovn \hat{Z}}}{2\,\vovn^5} \int_z^\infty
	e^{(\hat{Z}^2-s^2)/2} \sinh\vovn s \left( \cosh\vovn s - \cosh\vovn z \right)^2 \, \mathrm{d}s \\
	%&= \frac{e^{3\vovn z}}{2\,\vovn^5} \int_z^\infty e^{(\hat{Z}^2-s^2)/2} \sinh(\vovn s) \left[ \cosh^2(\vovn s) - 2\cosh(\vovn s)\cosh(\vovn z) + \cosh^2(\vovn z) \right] ds \\
	&= \frac{e^{3\vovn \hat{Z}}}{2\,\vovn^5} \int_z^\infty
	e^{(\hat{Z}^2-s^2)/2} \left( \frac{\sinh 3\vovn s+\sinh\vovn s}{4} - \sinh2\vovn s\cosh\vovn z + \sinh\vovn s \cosh^2\vovn z \right) \, \mathrm{d}s \\
%	= \frac{e^{3\vovn z}}{2\,\vovn^5\,n(z)} 
%	&= \left[ \frac{N(z+3\vovn) - N(z-3\vovn)}{4\cdot2 e^{-(3\vovn)^2/2}} - \frac{N(z+2\vovn) - N(z-2\vovn)}{2 e^{-(2\vovn)^2/2}}\cosh(\vovn z) + \frac{N(z+\vovn) - N(z-\vovn)}{2 e^{-\vovn^2/2}}\left(\cosh^2(\vovn z) + \frac{1}{4}\right) \right]\\
	& = \left(\frac{\sigma_{t+\dt}}{\sigma_t}\right)^3 \frac{1}{8\vovn^4}\left[3 m_3 - 8c m_2 + \left(4c^2 + 1\right)m_1 \right].
\end{align*}

For $k=4$, using $\sinh x\cosh^3x = \frac18 \left(\sinh 4x + 2\sinh 2x\right)$, we have
\begin{align*}
	\mu'_4 &= \frac{e^{4\vovn \hat{Z}}}{6\,\vovn^7} \int_z^\infty
	e^{(\hat{Z}^2-s^2)/2} \sinh\vovn s \left(\cosh\vovn s - \cosh\vovn z \right)^3 \, \mathrm{d}s \\
	%&= \frac{e^{4\vovn z}}{6\,\vovn^7} \int_z^\infty 	e^{(\hat{Z}^2-s^2)/2} \sinh(\vovn s) \left[ \cosh^3(\vovn s) - 3\cosh^2(\vovn s)\cosh(\vovn z) + 3\cosh(\vovn s)\cosh^2(\vovn z) - \cosh^3(\vovn z) \right] ds \\
	&= \frac{e^{4\vovn \hat{Z}}}{6\,\vovn^7} \int_z^\infty
e^{(\hat{Z}^2-s^2)/2} \left[\frac{\sinh4\vovn s + 2\sinh2\vovn s}{8} - \frac{3}{4}\left(\sinh3\vovn s+\sinh\vovn s\right)\cosh\vovn z \right. \\
    & \hspace{12em} \left. +\,\frac32\sinh2\vovn s\cosh^2\vovn z - \sinh\vovn s\cosh^3\vovn z \right] \, \mathrm{d}s \\
	%&= \frac{e^{4\vovn z}}{6\,\vovn^6}\left[\frac12 m(z,4\vovn) - \frac94 m(z,3\vovn) \cosh(\vovn z) + m(z,2\vovn) \left(3\cosh^2(\vovn z) + \frac{1}{2}\right) - m(z,\vovn) \left(\cosh^3(\vovn z) + \frac34 \cosh(\vovn z)\right)\right]\\
	&= \left(\frac{\sigma_{t+\dt}}{\sigma_t}\right)^4 \frac{1}{24\,\vovn^6} \left[2 m_4 - 9c m_3 + \left(12c^2 + 2\right) m_2 - c\left(4c^2 + 3\right)m_1\right].
\end{align*}

In addition, the expansion of the mean, coefficient of variance, skewness, and ex-kurtosis (see Remark~\ref{r:mom}) around $\vovn=0$ are obtained as
\begin{gather*}
	Cv = \frac{\vovn}{\sqrt3} + O(\vovn^3), \quad
	s = \frac{6\sqrt3}{5} \vovn + O(\vovn^3), \qtext{and}
	\kappa = \frac{276}{35} \vovn^2 + O(\vovn^4).
\end{gather*}

\section{Sampling Islah's approximation via a CEV distribution} \label{apdx:islah_cev} \noindent
Though Islah's approximation of $F_{t+h}$ is less than desirable since it fails the martingale property, nevertheless, we show that Islah's approximation can be expressed as a power function of a CEV random variable. For theoretical interest, we show that the CEV simulation algorithm in Section~\ref{ss:cev_sample} can be applied to expedite the sampling of $F_{t+h}$ based on Islah'a approximation.

We define
\begin{gather*}
	\beta' := \frac{\beta}{1 - \betac\rho^2}\;(\ge \beta), \quad \betac' := 1-\beta' = \frac{\betac\rhoc^2}{1 - \betac\rho^2} \; (\le \betac) ,\\ %\quad (\beta' + \betac' = 1)\\
	y' := \left(\frac{\betac'}{\betac} y^{\betac}\right)^{1/\betac'}, \qtext{and}
	\condFstep['] := \left| \frac{\betac'}{\betac} \left(F_t^{\betac} + \frac{\betac\rho}{\vov}(\sigma_{t+\dt} - \sigma_t)\right)\right|^{1/\betac'}
\end{gather*}
so that 
$$ \frac{(y')^{2\betac'}}{(\betac')^2} = \frac{y^{2\betac}}{\betac^2}
\qtext{and} \frac{\left(\condFstep[']\right)^{2\betac'}}{\betac'^2} = \frac{1}{\betac^2}\left[F_t^{\betac} + \frac{\betac\rho}{\vov}(\sigma_{t+\dt} - \sigma_t)\right]^2.
$$
As a result, the complementary CDF of Islah's approximation is equivalently expressed by
\begin{equation*}
	\mathrm{Prob}(F_{t+\dt} \ge y) 
	= \mathrm{Prob}\left(\left(\frac{\betac'}{\betac} F_{t+\dt}^{\betac}\right)^{1/\betac'} \ge y'\right)
	= P_\NCX\left(\frac{\left(\condFstep[']\right)^{2\betac'}}{\betac'^2\rhoc^2\sigma_t^2 h \IVstep}; \frac{1}{\betac'}, \frac{\left(y'\right)^{2\betac'}}{\betac'^2\rhoc^2\sigma_t^2 h \IVstep}\right),
\end{equation*}
and this implies that
\begin{equation} \label{eq:CEV_islah}
	\left(\frac{\betac'}{\betac} F_{t+\dt}^{\betac}\right)^{1/\betac'} \sim \cev[']\left(\condFstep['], \rhoc^2\sigma_t^2 \dt \IVstep\right).
\end{equation} 
In other words, instead of $F_{t+\dt}$ itself, a power function of $F_{t+\dt}$ follows a CEV distribution under Islah's approximation. Therefore, $F_{t+\dt}$ can be sampled as
$$ F_{t+\dt} \sim \left( \frac{\betac}{\betac'} \right)^{1 /\betac} Y^{\betac'/\betac} \qtext{where} Y \sim \cev[']\left(\condFstep['], \rhoc^2\sigma_t^2 \dt \IVstep\right).$$
Lastly, we obtain
$$ E\left(F_{t+\dt}^{\betac/\betac'} \,|\, \sigma_{t+\dt}, \IVstep\right) = \left( \frac{\betac}{\betac'} \right)^{1 /\betac'} E(Y) =  \left( \frac{\betac}{\betac'} \right)^{1 /\betac'} \condFstep['] = \left|F_t^{\betac} + \frac{\betac\rho}{\vov}(\sigma_{t+\dt} - \sigma_t)\right|^{1/\betac'}.$$

\begin{singlespace}
%\bibliography{../../@Bib/SABR}
\bibliography{SABR}
\end{singlespace}
\end{document}